\documentclass[11pt,letterpaper,reqno]{amsart}

\title[Relative Dynamics of Confined {BEC} Vortices]{Relative Dynamics of Vortices\\in Confined Bose--Einstein Condensates}
\author{Tomoki Ohsawa}
\address{Department of Mathematical Sciences, The University of Texas at Dallas, 800 W Campbell Rd, Richardson, TX 75080-3021}
\email{tomoki@utdallas.edu}
\date{\today}

\keywords{Bose--Einstein condensates, vortices, relative equilibria, stability}



\usepackage{graphicx,amssymb,paralist}

\usepackage{caption,subcaption}
\subcaptionsetup{labelfont=rm}
\captionsetup{subrefformat=parens}

\usepackage[margin=1in, marginpar=.5in]{geometry}

\usepackage{tikz}
\usetikzlibrary{tikzmark,fit}
\usepackage{eso-pic}
\usepackage{tikz-cd}

\usepackage{nicematrix}

\usepackage[numbers,sort&compress]{natbib}

\usepackage[colorlinks=false]{hyperref}

\usepackage[nameinlink]{cleveref}

\theoremstyle{plain}
\newtheorem{theorem}{Theorem}[section]

\newtheorem{proposition}[theorem]{Proposition}

\theoremstyle{definition}

\theoremstyle{remark}
\newtheorem{remark}[theorem]{Remark}

\def\od#1#2{\dfrac{d#1}{d#2}}
\def\pd#1#2{\dfrac{\partial #1}{\partial #2}}
\def\fd#1#2{\dfrac{\delta #1}{\delta #2}}

\def\tfd#1#2{\delta #1/\delta #2}

\def\parentheses#1{\!\left(#1\right)}
\def\brackets#1{\!\left[#1\right]}
\def\braces#1{\!\left\{#1\right\}}

\def\tr{\mathop{\mathrm{tr}}\nolimits}
\def\Span{\operatorname{span}} 
\def\rank{\operatorname{rank}}
\def\diag{\operatorname{diag}}

\def\norm#1{\left\|#1\right\|}

\def\R{\mathbb{R}}
\def\C{\mathbb{C}}

\def\defeq{\mathrel{\mathop:}=}

\def\setdef#1#2{ \left\{ #1 \ |\ #2 \right\} }
\def\ip#1#2{\left\langle#1,#2\right\rangle}

\renewcommand{\Re}{\operatorname{Re}}
\renewcommand{\Im}{\operatorname{Im}}

\def\rmi{{\rm i}}

\def\d{\mathbf{d}}

\def\PB#1#2{\left\{#1,#2\right\}}

\newcommand\ad{\operatorname{ad}}


\def\U{\mathsf{U}}

\def\u{\mathfrak{u}}

\def\bx{\mathbf{x}}

\begin{document}

\footskip=.6in

\begin{abstract}
  We consider the relative dynamics---the dynamics modulo rotational symmetry in this particular context---of $N$ vortices in confined Bose--Einstein Condensates (BEC) using a finite-dimensional vortex approximation to the two-dimensional Gross--Pitaevskii equation. We give a Hamiltonian formulation of the relative dynamics by showing that it is an instance of the Lie--Poisson equation on the dual of a certain Lie algebra. Just as in our accompanying work on vortex dynamics with the Euclidean symmetry, the relative dynamics possesses a Casimir invariant and evolves in an invariant set, yielding an Energy--Casimir-type stability condition.
  We consider three examples of relative equilibria---those solutions that are undergoing rigid rotations about the origin---with $N=2, 3, 4$, and investigate their stability using the stability condition.
\end{abstract}

\maketitle

\section{Introduction}
\subsection{Dynamics of Confined BEC Vortices}
We consider the dynamics of $N$ interacting vortices $\{ \bx_{i} = (x_{i},y_{i}) \in \R^{2} \}_{i=1}^{N}$ with topological charges $\{ \Gamma_{i} \in \mathbb{Z} \backslash\{0\} \}_{i=1}^{N}$ in a harmonic trap on the plane $\R^{2}$ governed by
\begin{equation}
  \label{eq:basic}
  \begin{split}
    \dot{x}_{i} &= -\Gamma_{i} \frac{y_{i}}{1 - \norm{\bx_{i}}^{2}} - c \sum_{\substack{1\le j \le N\\ j \neq i}} \Gamma_{j} \frac{y_{i} - y_{j}}{ \norm{\bx_{i} - \bx_{j}}^{2} }, \\
    \dot{y}_{i} &=  \Gamma_{i} \frac{x_{i}}{1 - \norm{\bx_{i}}^{2}} + c \sum_{\substack{1\le j \le N\\ j \neq i}} \Gamma_{j} \frac{x_{i} - x_{j}}{ \norm{\bx_{i} - \bx_{j}}^{2} }
  \end{split}
\end{equation}
with some constant $c > 0$ (see below for its definition) for $i \in \{1, \dots, N\}$.
In what follows we shall often identify $\R^{2}$ with $\C$ in the standard manner.
Indeed, one may write the above equations in a more succinct form via $z_{i} \defeq x_{i} + \rmi y_{i} \in \C$ as follows:
\begin{equation}
  \label{eq:N-point_vortices}
  \dot{z}_{i} = \rmi\parentheses{
    \Gamma_{i}\frac{z_{i}}{1 - |z_{i}|^{2}}
    + c \sum_{\substack{1\le j \le N\\ j \neq i}} \Gamma_{j} \frac{z_{i} - z_{j}}{|z_{i} - z_{j}|^{2}}
  }
  \qquad
  i \in \{1, \dots, N\}.
\end{equation}

These equations are obtained as a finite-dimensional vortex approximation to the Gross--Pitaevskii (GP) equation for quasi-two-dimensional (pancake-shaped) Bose--Einstein condensates (BEC) confined by a harmonic potential (see, e.g., \citet{FeSv2001,Fe2009} and references therein):
\begin{equation}
  \label{eq:GP}
  \rmi \pd{\psi}{\tau} = -\frac{1}{2} \parentheses{ \pd{^{2}}{\xi^{2}} +  \pd{^{2}}{\eta^{2}} } \psi + \frac{\omega_{\text{tr}}^{2}}{2}(\xi^{2} + \eta^{2}) + (|\psi|^{2} - \mu) \psi
\end{equation}
for $\psi\colon \R \times \R^{2} \to \C; (\tau, (\xi,\eta)) \mapsto \psi(\tau, \xi, \eta)$, where $\mu$ is the chemical potential and $\omega_{\text{tr}} > 0$ is the ratio of the transversal (i.e., along the $(\xi,\eta)$-plane) frequency of the three-dimensional harmonic trapping potential to its longitudinal (i.e., perpendicular to the $(\xi,\eta)$-plane) frequency.

More specifically, \citet{MiKeFrCaSc2010a,MiKeFrCaSc2010b,MiToKeFrCaScFrHa2011} (see also \cite{KeCaFrKe2004}) applied to \eqref{eq:GP} those techniques developed for vortex dynamics in (untrapped) nonlinear Schr\"odinger equation (see, e.g., \citet{Ne1990} and \citet{PiRu1991}) and derived the equations of motion for the centers $\{ \boldsymbol{\xi}_{i} \defeq (\xi_{i},\eta_{i}) \in \R^{2} \}_{i=1}^{N}$ of $N$ vortices in the confined BEC as
\begin{align*}
  \od{\xi_{i}}{\tau} &= -\Gamma_{i} \omega_{\text{pr}}^{0} \frac{\eta_{i}}{1 - \norm{\boldsymbol{\xi}_{i}}^{2}} - b \sum_{\substack{1\le j \le N\\ j \neq i}} \Gamma_{j} \frac{\eta_{i} - \eta_{j}}{ \norm{\boldsymbol{\xi}_{i} - \boldsymbol{\xi}_{j}}^{2} }, \\
  \od{\eta_{i}}{\tau} &=  \Gamma_{i} \omega_{\text{pr}}^{0} \frac{\xi_{i}}{1 - \norm{\boldsymbol{\xi}_{i}}^{2}} + b \sum_{\substack{1\le j \le N\\ j \neq i}} \Gamma_{j} \frac{\xi_{i} - \xi_{j}}{ \norm{\boldsymbol{\xi}_{i} - \boldsymbol{\xi}_{j}}^{2} }
\end{align*}
with
\begin{equation*}
  \omega_{\text{pr}}^{0} \defeq \frac{1}{R_{\text{TF}}^{2}} \ln\parentheses{ 2\sqrt{2}\pi \frac{\mu}{\omega_{\text{tr}}} },
  \qquad
  R_{\text{TF}} \defeq \frac{\sqrt{2\mu}}{\omega_{\text{tr}}},
\end{equation*}
which are the precession frequency at the trap center (the origin of the $(\xi,\eta)$-plane), and the Thomas--Fermi radius, i.e., an approximate radial extent of the pancake-shaped BEC; the numerical factor $b$ characterizes the strength of interactions between vortices; for example, in the experimental setting of \cite{MiToKeFrCaScFrHa2011}, it is determined empirically that $b = 1.35$.

These equations are inspired by experimental observations of vortex dipoles~\cite{NeSaBrDaAn2010,FrBiKaLaHa2010} and three-vortex configurations~\cite{SeHeHaShRaCaCaCaTaPoRoMaBa2010}, have shown a good agreement with experiments for vortex dipoles~\cite{MiToKeFrCaScFrHa2011,NaCaToKeFrRaAlHa2013}, and are also mathematically justified by \citet{PeKe2011} for $N = 1,2,4$ as a variational approximation to the GP equation.

The equations in \eqref{eq:basic} are obtained from the above equations via rescaling (see, e.g., \citet{KoVoKe2014})
\begin{equation*}
  (x_{i}, y_{i}) \defeq \frac{1}{R_{\text{TF}}} (\xi_{i}, \eta_{i}),
  \qquad
  t \defeq \omega_{\text{pr}}^{0} \tau,
  \qquad
  c \defeq \frac{b}{ 2\ln\parentheses{ 2\sqrt{2}\pi \mu/\omega_{\text{tr}} } },
\end{equation*}
and we shall focus on the dynamics of \eqref{eq:basic} in this paper.

We note that we are interested in the dynamics of vortices trapped in the open unit disc centered at the origin of $\R^{2} \cong \C$, i.e., $\norm{\bx_{i}} = |z_{i}| < 1$ for every $i \in \{1, \dots, N\}$---within the Thomas--Fermi radius from the origin.

The dynamics of \eqref{eq:basic} has been studied fairly well for a few vortices ($N = 2,3,4$):
For $N = 2$, the dipole case with $\Gamma_{1} = 1$ and $\Gamma_{2} = -1$ was studied theoretically and numerically by \citet{GoKeCa2015} and \citet{ToKeFrCaScHa2011}, and the same sign case $\Gamma_{1} = \Gamma_{2} = 1$ numerically by \citet{MuGrKuSi2016}.
For $N = 3$, the chaotic dynamics of the tripole case with $(\Gamma_{1}, \Gamma_{2}, \Gamma_{3}) = (1, -1, 1)$ was studied numerically by \citet{KyKoSkKe2014}, and its transition to chaos was studied by \citet{KoVoKe2014} combining analytical and numerical methods.
Also, \citet{NaCaToKeFrRaAlHa2013} revealed a pitchfork bifurcation behind the instability of the symmetric configurations of a few vortices experimentally as well as using a combined theoretical and numerical method.

\subsection{Hamiltonian Formulation}
It is well known that the equations in \eqref{eq:basic} constitute a Hamiltonian system in the sense we shall describe below.

Let
\begin{equation}
  \label{eq:D_Gamma}
  \mathsf{D}_{\Gamma} \defeq \diag(\Gamma_{1}, \dots, \Gamma_{N})
\end{equation}
be the $N \times N$ diagonal matrix whose diagonal entries are the topological charges $\{ \Gamma_{i} \in \mathbb{Z} \backslash\{0\} \}_{i=1}^{N}$, and define the skew-symmetric $2N \times 2N$ matrix
\begin{equation*}
  \mathbb{J} \defeq
  \begin{bmatrix}
    0 & \mathsf{D}_{\Gamma} \\
    -\mathsf{D}_{\Gamma} & 0
  \end{bmatrix}
\end{equation*}
This matrix defines the symplectic form
\begin{equation}
  \label{eq:Omega}
  \Omega \defeq \sum_{i=1}^{N}\Gamma_{i}\, \d{x}_{i} \wedge \d{y}_{i}
\end{equation}
on $\R^{2N}$ in the sense that
\begin{equation*}
  \Omega(v,w) = v^{T} \mathbb{J} w
  \quad
  \forall v, w \in \R^{2N},
\end{equation*}
where we shall use
\begin{equation*}
  z = (x_{1}, \dots, x_{N}, y_{1}, \dots, y_{N}) \in \R^{2N}
  \longleftrightarrow
  z = (z_{1}, \dots, z_{N}) \in \C^{N}
\end{equation*}
interchangeably as coordinates for $\R^{2N} \cong \C^{N}$.
In terms of the complex coordinates in $\C^{N}$, we have
\begin{equation*}
  \Omega = -\frac{1}{2} \sum_{i=1}^{N} \Gamma_{i} \Im(\d{z}_{i} \wedge \d{z}_{i}^{*}) = -\d\Theta
\end{equation*}
with
\begin{equation}
  \label{eq:Theta}
  \Theta \defeq -\frac{1}{2} \sum_{i=1}^{N} \Gamma_{i} \Im(z_{i}^{*} \d{z}_{i}),
\end{equation}

Given a (smooth) function $F\colon \R^{2N} \to \R$, we may define the corresponding Hamiltonian vector field $X_{F}$ on $\R^{2N} \cong \C^{N}$ with respect to the above symplectic form as follows:
\begin{equation*}
  X_{F}(z) \defeq (\mathbb{J}^{T})^{-1} DF(z) = -\mathbb{J}^{-1} DF(z),
\end{equation*}
where $DF(z)$ stands for the gradient of $F$ at $z \in \R^{2N}$ as a column vector in $\R^{2N}$, and
\begin{equation*}
  (\mathbb{J}^{T})^{-1}
  = -\mathbb{J}^{-1}
  = \begin{bmatrix}
    0 & \mathsf{D}_{\Gamma}^{-1} \\
    -\mathsf{D}_{\Gamma}^{-1} & 0
  \end{bmatrix},
\end{equation*}
where $\mathsf{D}_{\Gamma}^{-1} = \diag(1/\Gamma_{1}, \dots, 1/\Gamma_{N})$.

Then the corresponding Poisson bracket is
\begin{equation}
  \label{eq:PB}
  \begin{split}
    \PB{F}{H}(z)
    \defeq \Omega(X_{F}, X_{H})(z)
    &= X_{F}(z)^{T} \mathbb{J} X_{H}(z) \\
    &= DF(z)^{T} (\mathbb{J}^{T})^{-1} DH(z) \\
    &= \sum_{i=1}^{N} \frac{1}{\Gamma_{i}}\parentheses{
      \pd{F}{x_{i}} \pd{H}{y_{i}}
      - \pd{F}{y_{i}} \pd{H}{x_{i}}
      }
  \end{split}
\end{equation}
for every pair of smooth $F, H\colon \R^{2N} \to \R$.
One then sees that $X_{F}(z) = \PB{z}{F}$ in the sense that the equality holds for each pair of corresponding components of $X_{F}(z)$ and $z$.

Let us define a Hamiltonian
\begin{equation}
  \label{eq:H}
  \begin{split}
    H(z) &\defeq \frac{1}{2}\parentheses{
           \sum_{i=1}^{N} \Gamma_{i}^{2} \ln\left(1 - \norm{\bx_{i}}^{2} \right)
           - c \sum_{1\le i < j \le N} \Gamma_{i} \Gamma_{j} \ln\norm{\bx_{i} - \bx_{j}}^{2}
           } \\
         &= \frac{1}{2}\parentheses{
           \sum_{i=1}^{N} \Gamma_{i}^{2} \ln\left(1 - |z_{i}|^{2} \right)
           - c \sum_{1\le i < j \le N} \Gamma_{i} \Gamma_{j} \ln |z_{i} - z_{j}|^{2}
           },
  \end{split}
\end{equation}
which is defined in $\R^{2N}$ except for those points of collisions, i.e., $\bx_{i} = \bx_{j}$ with $i \neq j$.
We shall ignore this issue that $H$ is not defined on the entire $\R^{2N}$ as it is not essential in our treatment of relative dynamics.

Then, the equations in \eqref{eq:basic} are given as the following Hamiltonian system:
\begin{equation*}
  \dot{z} = X_{H}(z) = \PB{z}{H},
\end{equation*}
again in the sense that the equation holds for each component.

\section{Relative Dynamics}
\subsection{Relative Dynamics with $N = 2$}
\label{ssec:illustrative_example}
The theory to be developed in this section applies to $N$ vortices in general, but for illustrative purpose, we shall first consider the case with $N = 2$.

The goal of relative dynamics for $N = 2$ is to describe the dynamics of the triangular shape made by the 3 points consisting of the two vortices and the origin, regardless of its orientation; see \Cref{fig:Dumbbell}.
We note in passing that the ``triangles'' include those degenerate cases where the vortices and the origin are on a single line.
\begin{figure}[htbp]
  \centering
  \includegraphics[width=.25\linewidth]{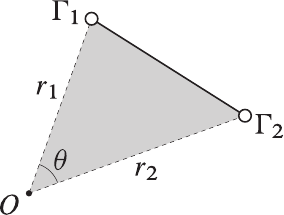}
  \caption{Triangle made by two vortices and the origin, and its parametrization. \citet{ToKeFrCaScHa2011} derived the time evolution equations for $(r_{1}, r_{2}, \theta)$---an instance of relative dynamics for $N = 2$.}
  \label{fig:Dumbbell}
\end{figure}

To put it differently, two triangles obtained by a rigid rotation about the origin are considered the same shape, and we are interested in the time evolution of the triangular shape itself by modding out the rigid rotational motion about the origin.
We shall refer to such dynamics of relative configurations as the \textit{relative dynamics} in what follows.

One sees an instance of relative dynamics for vortex dipoles---$N = 2$ with opposite topological charges $\Gamma_{1} = -\Gamma_{2} = 1$---in \citet{ToKeFrCaScHa2011}:
They rewrite the dynamics of the vortex dipoles located at $z_{i} = r_{i} e^{\rmi\theta_{i}}$ with $i = 1, 2$ into the dynamics of $(r_{1}, r_{2}, \theta)$ with $\theta \defeq \theta_{1} - \theta_{2}$.
Clearly these three parameters describe the shape of the triangle regardless of its orientation; see~\Cref{fig:Dumbbell}.

\citet{ToKeFrCaScHa2011} also wrote down the Hamiltonian $H(r_{1}, r_{2}, \theta)$ in terms of the three parameters.
However, it is not clear how the equations for $(r_{1}, r_{2}, \theta)$ are a Hamiltonian system with this particular Hamiltonian $H$.
Indeed, this is an odd-dimensional system, and so would not be a Hamiltonian system in the canonical sense.

We would like to formulate the relative dynamics as a Hamiltonian system.
To that end, first consider the matrix
\begin{equation}
  \label{eq:mu-N=2}
  \mu =
  \rmi
  \begin{bmatrix}
    \mu_{1} & \mu_{3} + \rmi \mu_{4} \\
    \mu_{3} - \rmi \mu_{4} & \mu_{2}
  \end{bmatrix}
  \defeq \rmi z z^{*} = \rmi
  \begin{bmatrix}
    |z_{1}|^{2} & z_{1} z_{2}^{*} \smallskip\\
    z_{2} z_{1}^{*} &  |z_{2}|^{2}
  \end{bmatrix},
\end{equation}
that is, we have
\begin{equation}
  \label{eq:mu-shapevars_N=2}
  \mu_{1} = r_{1}^{2},
  \qquad
  \mu_{2} = r_{2}^{2},
  \qquad
  \mu_{3} = r_{1} r_{2} \cos\theta,
  \qquad
  \mu_{4} = r_{1} r_{2} \sin\theta.
\end{equation}
We see that the first three $(\mu_{1}, \mu_{2}, \mu_{3})$ are essentially equivalent to $(r_{1}, r_{2}, \theta)$, and parametrize the triangle.
So the last parameter $\mu_{4}$ is redundant; in fact, the above definition of $\mu$ implies that $\rank \mu = 1$ (excluding $\mu = 0$ in which case the vortices collide at the origin) and this gives that
\begin{equation*}
  \det \mu = \mu_{1} \mu_{2} - \mu_{3}^{2} - \mu_{4}^{2} = 0,
\end{equation*}
defining a three-dimensional invariant submanifold of the dynamics, effectively eliminating $\mu_{4}$.
Now, writing the Hamiltonian $H$ from \eqref{eq:H} in terms of $\mu$ as
\begin{equation}
  \label{eq:h-N=2}
  h(\mu) \defeq \frac{1}{2}\parentheses{
    \Gamma_{1}^{2} \ln(1 - \mu_{1}) + \Gamma_{2}^{2} \ln(1 - \mu_{2}) 
    - c\, \Gamma_{1} \Gamma_{2} \ln \left( \mu_{1} + \mu_{2} - 2\mu_{3} \right)
  },
\end{equation}
we can show (as we shall explain below for the general case) that the time evolution of $\mu$ is governed by the matrix differential equation
\begin{equation*}
  \dot{\mu} = \mathsf{D}_{\Gamma}^{-1} \fd{h}{\mu} \mu - \mu \fd{h}{\mu} \mathsf{D}_{\Gamma}^{-1},
\end{equation*}
where
\begin{equation}
  \label{eq:dh-N=2}
  \fd{h}{\mu} \defeq \rmi
  \begin{bmatrix}
    2 \pd{h}{\mu_{1}} & \pd{h}{\mu_{3}} + \rmi\,\pd{h}{\mu_{4}} \medskip\\
      \pd{h}{\mu_{3}} - \rmi\,\pd{h}{\mu_{4}} & 2 \pd{h}{\mu_{2}}
  \end{bmatrix}.
\end{equation}
We shall explain below why the above differential is natural in this context.

The above set of equations is an instance of the special class of Hamiltonian systems called the Lie--Poisson equations; see \citet[Chapter~13]{MaRa1999}.
We shall come back to the case with $N = 2$ in \Cref{ssec:Dumbbell}, and apply the above formulation to the problem of finding relative equilibria and analyzing their stability.

\subsection{Rotational Symmetry}
Consider the $\mathbb{S}^{1}$-action on $\C^{N}$ defined as
\begin{equation}
  \label{eq:S1-action}
  \mathbb{S}^{1} \times \C^{N} \to \C^{N};
  \qquad
  (e^{\rmi\theta}, z) \mapsto e^{\rmi\theta} z = \left( e^{\rmi\theta} z_{1}, \dots, e^{\rmi\theta} z_{N} \right),
\end{equation}
which corresponds to rigid rotations of the $N$ vortices about the origin; see \Cref{fig:Symmetry}.
The system~\eqref{eq:basic} possesses $\mathbb{S}^{1}$-symmetry in the sense that both the symplectic form \eqref{eq:Omega} and the Hamiltonian are invariant under the action.

\begin{figure}[htbp]
  \centering
  \includegraphics[width=.4\linewidth]{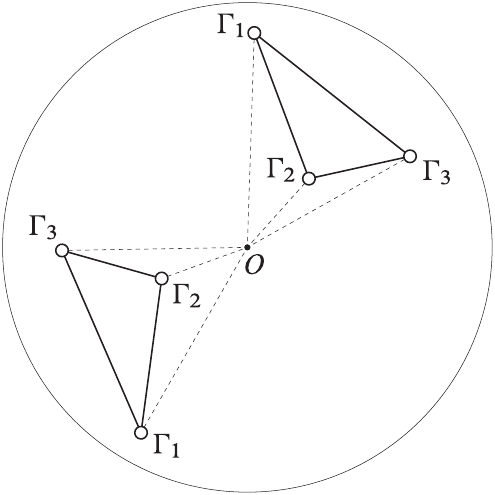}
  \caption{Rotational symmetry pictured for $N = 3$.
    The trapping potential breaks the translational symmetry but retains the rotational symmetry:
    Any two configurations of the 4 points (the vortices and the origin) obtained by a rotation about the origin are essentially the same; they indeed define the same shape.
    In other words, they belong to the same equivalence class defined by the action~\eqref{eq:S1-action}.
  }
  \label{fig:Symmetry}
\end{figure}

This symmetry implies that one may reduce the dynamics by the $\mathbb{S}^{1}$-symmetry to describe the dynamics in which only the relative positions of the $N+1$ points (the $N$ vortices and the origin) matter by identifying those configurations that are rigid rotations to one another as a single relative configuration (i.e., taking an equivalence class).

\subsection{Geometry of Relative Dynamics}
We shall adapt ideas from our previous work~\cite{Oh2019d} (see also \citet{BoPa1998} and \citet{BoBoMa1999}) to our setting, and consider the map
\begin{equation*}
  \C^{N} \to \u(N);
  \qquad
  z \mapsto \rmi z z^{*},
\end{equation*}
sending the positions $z \in \C^{N}$ of the vortices to the skew-Hermitian matrix $\rmi z z^{*}$.
The set of $N \times N$ complex skew-Hermitian matrices is often identified as the Lie algebra
\begin{equation*}
  \u(N) \defeq \setdef{ \xi \in \C^{N\times N} }{ \xi^{*} = -\xi }
\end{equation*}
of the unitary group $\U(N)$ equipped with the Lie bracket given by the standard commutator.
However, we shall instead equip $\u(N)$ with the following non-standard bracket
\begin{equation*}
  [\xi, \eta]_{\Gamma} \defeq \xi \mathsf{D}_{\Gamma}^{-1} \eta - \eta \mathsf{D}_{\Gamma}^{-1} \xi,
\end{equation*}
with $\mathsf{D}_{\Gamma}$ defined in \eqref{eq:D_Gamma}, and define $\u(N)_{\Gamma}$ as the vector space $\u(N)$ equipped with the above bracket; as a result $\u(N)_{\Gamma}$ is a Lie algebra as well.

We shall also define an inner product on $\u(N)_{\Gamma}$ as
\begin{equation}
  \label{eq:ip}
  \ip{\xi}{\eta} \defeq \frac{1}{2}\tr(\xi^{*}\eta),
\end{equation}
and, in what follows, identify the dual $\u(N)_{\Gamma}^{*}$ with $\u(N)_{\Gamma}$ itself via this inner product: An element $\alpha \in \u(N)_{\Gamma}^{*}$ gives a linear map $\alpha\colon \u(N)_{\Gamma} \to \R$, but one can find a unique $\alpha^{\sharp} \in \u(N)_{\Gamma}$ such that $\alpha(\eta) = \ip{\alpha^{\sharp}}{\eta}$ for every $\eta \in \u(N)_{\Gamma}$.

One may also define a Poisson bracket on $\u(N)_{\Gamma}^{*} \cong \u(N)_{\Gamma}$ as follows:
For every pair of smooth $f, h\colon \u(N)^{*} \to \R$,
\begin{equation}
  \label{eq:LPB}
  \begin{split}
    \PB{f}{h}_{\Gamma}(\mu)
    &\defeq \ip{\mu}{ \brackets{ \fd{f}{\mu}, \fd{h}{\mu} }_{\Gamma} } \\
    &= \frac{1}{2}\tr\left( \mu^{*} \left( \fd{f}{\mu} \mathsf{D}_{\Gamma}^{-1} \fd{h}{\mu} - \fd{h}{\mu} \mathsf{D}_{\Gamma}^{-1} \fd{f}{\mu} \right) \right),
  \end{split}
\end{equation}
where the derivative $\tfd{f}{\mu} \in \u(N)_{\Gamma}^{*}$ is defined so that, for any $\mu,\nu \in \u(N)_{\Gamma}^{*}$,
\begin{equation}
  \label{def:tfd}
  \ip{\nu}{\fd{f}{\mu}}
  = \frac{1}{2}\tr\parentheses{ \nu^{*} \fd{f}{\mu} }
  = \left.\od{}{s}\right|_{s=0} f(\mu + s\nu).
\end{equation}
This definition applied to the case with $N = 2$ yields \eqref{eq:dh-N=2}.
The above Poisson bracket~\eqref{eq:LPB} is an instance of the so-called Lie--Poisson bracket defined on the dual of every Lie algebra; see, e.g., \citet[Chapter~13]{MaRa1999}.

Now, we define the map introduced at the beginning of this subsection as 
\begin{equation*}
  \mathbf{J}\colon \C^{N} \to \u(N)_{\Gamma}^{*} \cong \u(N)_{\Gamma};
  \qquad
  z \mapsto \rmi z z^{*}.
\end{equation*}
The significance of this map is that it is a Poisson map with respect to the Poisson bracket~\eqref{eq:PB} on $\R^{2N} \cong \C^{N}$ and the Lie--Poisson bracket~\eqref{eq:LPB} on $\u(N)_{\Gamma}^{*} \cong \u(N)_{\Gamma}$, that is, for every pair of smooth $f, h\colon \u(N)^{*} \to \R$,
\begin{equation*}
  \PB{ f \circ \mathbf{J} }{ h \circ \mathbf{J} }
  = \PB{f}{h}_{\Gamma} \circ \mathbf{J}.
\end{equation*}
One may prove it just as we did in \cite[Section~3.2]{Oh2019d} as follows:
Define the Lie group
\begin{equation*}
  \U(N)_{\Gamma} \defeq \setdef{ U \in \C^{N \times N} }{ U^{*} \mathsf{D}_{\Gamma} U = \mathsf{D}_{\Gamma} },
\end{equation*}
and its action on $\C^{N}$ as follows:
\begin{equation*}
  \U(N)_{\Gamma} \times \C^{N} \to \C^{N};
  \qquad
  (U, z) \mapsto U z.
\end{equation*}
Then this action is symplectic with respect to the symplectic form~\eqref{eq:Omega} as one can easily check using the expression for $\Theta$ in \eqref{eq:Theta}.
Its associated momentum map is then $\mathbf{J}$, and it is equivariant: $U \mathbf{J}(z) U^{*} = \mathbf{J}(U z)$ for every $U \in \U(N)_{\Gamma}$ and every $z \in \C^{N}$.
Then a well-known property of equivariant momentum maps (see, e.g., \cite[Theorem~12.4.1]{MaRa1999}) gives the desired result.

Moreover, writing an arbitrary element $\mu \in \u(N)_{\Gamma}^{*} \cong \u(N)_{\Gamma}$ as
\begin{equation}
  \label{eq:mu}
  \mu = \rmi\,
  \begin{bNiceMatrix}
    \mu_{1} & \mu_{12} & \Cdots & \mu_{1,N} \\
    \mu_{12}^{*} & \Ddots & \Ddots & \vdots \\
    \vdots  & \Ddots & \Ddots  & \mu_{N-1,N} \\
    \mu_{1,N}^{*} & \Cdots & \mu_{N-1,N}^{*} & \mu_{N}
  \end{bNiceMatrix}
\end{equation}
with
\begin{equation*}
  \mu_{i} \in \R \text{ for } 1 \le i \le N,
  \qquad
  \mu_{ij} \in \C \text{ for } 1 \le i < j \le N,
\end{equation*}
we may define
\begin{equation}
  \label{eq:h}
  h(\mu) \defeq \frac{1}{2}\parentheses{
    \sum_{i=1}^{N} \Gamma_{i}^{2} \ln(1 - \mu_{i})
    - c \sum_{1\le i < j \le N} \Gamma_{i} \Gamma_{j} \ln \left( \mu_{i} + \mu_{j} - 2\Re\mu_{ij} \right)
  }
\end{equation}
so that $h \circ \mathbf{J} = H$.

The facts that $\mathbf{J}$ is Poisson and that the original Hamiltonian $H$ may be written as $H = h \circ \mathbf{J}$ implies that $\mathbf{J}$ maps the original Hamiltonian system~\eqref{eq:basic} to another Hamiltonian system.
More specifically, we have
\begin{equation*}
  \dot{z} = \PB{z}{H}
  \overset{\mathbf{J}}{\leadsto}
  \dot{\mu} = \PB{\mu}{h}_{\Gamma}
\end{equation*}
via $\mu = \mathbf{J}(z)$.
We may write down the equations on the right more explicitly as the Lie--Poisson equation
\begin{equation}
  \label{eq:LP}
  \dot{\mu}
  = -\ad^{*}_{\tfd{h}{\mu}} \mu
  = \mathsf{D}_{\Gamma}^{-1} \frac{\delta h}{\delta\mu} \mu - \mu \frac{\delta h}{\delta\mu} \mathsf{D}_{\Gamma}^{-1},
\end{equation}
where we defined the coadjoint action $\ad^{*}\colon \u(N)_{\Gamma} \times \u(N)_{\Gamma}^{*} \to \u(N)_{\Gamma}^{*}$ as follows:
\begin{equation*}
  \ip{ \ad^{*}_{\xi} \mu }{ \eta } = \ip{ \mu }{ [\xi,\eta]_{\Gamma} }
  \quad
  \forall \xi, \eta \in \u(N)_{\Gamma}
  \quad
  \forall \mu \in \u(N)_{\Gamma}^{*},
\end{equation*}
which yields
\begin{equation*}
  \ad^{*}_{\xi} \mu = \mu \xi \mathsf{D}_{\Gamma}^{-1} - \mathsf{D}_{\Gamma}^{-1} \xi \mu,
\end{equation*}
where $\mu \in \u(N)_{\Gamma}^{*}$ is identified with the corresponding element in $\u(N)_{\Gamma}$ via the inner product~\eqref{eq:ip}.

The upshot is that the momentum map $\mathbf{J}$ maps the Hamiltonian system~\eqref{eq:basic} to the Lie--Poisson equation~\eqref{eq:LP}.
This is an instance of the so-called ``collective dynamics''; see, e.g., \citet{GuSt1980} and \cite[Section~28]{GuSt1990}.

\subsection{Casimir of Relative Dynamics}
The Lie--Poisson equation~\eqref{eq:LP} possesses the following Casimir invariant:
\begin{proposition}
  The function
  \begin{equation}
    \label{eq:C}
    C\colon \u(N)_{\Gamma}^{*} \to \R;
    \qquad
    C(\mu) \defeq \tr(-\rmi D_{\Gamma} \mu) = \sum_{i=1}^{N} \Gamma_{i} \mu_{i}
  \end{equation}
  is a Casimir of the Lie--Poisson bracket~\eqref{eq:LPB}, i.e., $\PB{C}{f}_{\Gamma} = 0$ for \textit{every} smooth $f\colon \u(N)_{\Gamma}^{*} \to \R$.
  As a result, $C$ is an invariant of the Lie--Poisson equation~\eqref{eq:LP}.
\end{proposition}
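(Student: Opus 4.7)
The plan is to show that the bracket $\{C,f\}_\Gamma$ vanishes for every smooth $f\colon \u(N)_\Gamma^* \to \R$, and then conclude invariance along the Lie--Poisson flow as an immediate corollary. Given the explicit form of the bracket in \eqref{eq:LPB}, the work reduces to computing the functional derivative $\delta C/\delta \mu$ and then checking that the resulting commutator in \eqref{eq:LPB} is zero.

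The first step is to compute $\delta C/\delta \mu$ from the defining relation~\eqref{def:tfd}. Since $C(\mu) = \tr(-\rmi \mathsf{D}_\Gamma \mu)$ is linear in $\mu$, I have
\begin{equation*}
  \left.\od{}{s}\right|_{s=0} C(\mu + s\nu) = \tr(-\rmi \mathsf{D}_\Gamma \nu)
\end{equation*}
for any $\nu \in \u(N)_\Gamma^*$. Using $\nu^* = -\nu$ together with the cyclicity of trace, the right-hand side equals $\tfrac{1}{2}\tr(\nu^* \cdot 2\rmi \mathsf{D}_\Gamma)$, so by uniqueness of $\delta C/\delta \mu$ (together with the fact that $2\rmi\mathsf{D}_\Gamma$ is skew-Hermitian hence lies in $\u(N)_\Gamma$) I obtain $\delta C/\delta \mu = 2\rmi \mathsf{D}_\Gamma$. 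As a sanity check with the $N=2$ formula~\eqref{eq:dh-N=2}, one reads off directly that $C = \Gamma_1 \mu_1 + \Gamma_2 \mu_2$ gives precisely $\delta C/\delta\mu = 2\rmi\,\diag(\Gamma_1,\Gamma_2)$.

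The key observation is that $2\rmi \mathsf{D}_\Gamma$ lies in the center of the Lie algebra $(\u(N)_\Gamma,[\cdot,\cdot]_\Gamma)$. Indeed, for every $\eta \in \u(N)_\Gamma$,
\begin{equation*}
  [2\rmi \mathsf{D}_\Gamma,\eta]_\Gamma = 2\rmi\,\mathsf{D}_\Gamma \mathsf{D}_\Gamma^{-1}\eta - 2\rmi\,\eta \mathsf{D}_\Gamma^{-1} \mathsf{D}_\Gamma = 2\rmi\,\eta - 2\rmi\,\eta = 0,
\end{equation*}
because $\mathsf{D}_\Gamma$ and $\mathsf{D}_\Gamma^{-1}$ commute (both being diagonal). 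Substituting $\delta C/\delta \mu = 2\rmi \mathsf{D}_\Gamma$ directly into \eqref{eq:LPB} yields the same cancellation:
\begin{equation*}
  \fd{C}{\mu} \mathsf{D}_\Gamma^{-1} \fd{f}{\mu} - \fd{f}{\mu} \mathsf{D}_\Gamma^{-1} \fd{C}{\mu} = 2\rmi\,\fd{f}{\mu} - 2\rmi\,\fd{f}{\mu} = 0,
\end{equation*}
so $\{C,f\}_\Gamma(\mu) = 0$ for every smooth $f$ and every $\mu \in \u(N)_\Gamma^*$. Thus $C$ is a Casimir. The second assertion is then automatic: along any solution of \eqref{eq:LP} we have $\dot{C} = \{C,h\}_\Gamma = 0$, so $C$ is conserved. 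There is no genuine obstacle here; the only subtlety is keeping careful track of the identification $\u(N)_\Gamma^* \cong \u(N)_\Gamma$ via the inner product~\eqref{eq:ip} when computing $\delta C/\delta \mu$, and recognizing that $2\rmi \mathsf{D}_\Gamma$ is central precisely because the bracket $[\cdot,\cdot]_\Gamma$ is twisted by $\mathsf{D}_\Gamma^{-1}$.
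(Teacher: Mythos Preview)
Your proof is correct and follows essentially the same route as the paper: compute $\delta C/\delta\mu = 2\rmi\mathsf{D}_{\Gamma}$ from the defining pairing, observe that this element commutes with everything under $[\cdot,\cdot]_{\Gamma}$, and conclude that the Lie--Poisson bracket with any $f$ vanishes. The only differences are cosmetic---you add the explicit justification via $\nu^{*}=-\nu$, the $N=2$ sanity check, and the phrasing in terms of the center of $\u(N)_{\Gamma}$---none of which changes the argument.
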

\begin{proof}
  Let us first compute $\tfd{C}{\mu}$ following the definition~\eqref{def:tfd}:
  For any $\mu,\nu \in \u(N)_{\Gamma}^{*}$,
  \begin{equation*}
    \ip{\nu}{\fd{C}{\mu}}
    = \left.\od{}{s}\right|_{s=0} C(\mu + s\nu)
    = \tr(-\rmi \mathsf{D}_{\Gamma} \nu)
    = \ip{ \nu }{ 2\rmi \mathsf{D}_{\Gamma} }
    \implies
    \frac{\delta C}{\delta\mu} = 2\rmi \mathsf{D}_{\Gamma}.
  \end{equation*}
  Therefore,
  \begin{equation*}
    \brackets{ \fd{f}{\mu}, \fd{C}{\mu} }_{\Gamma}
    = 2\rmi \parentheses{
      \fd{f}{\mu} \mathsf{D}_{\Gamma}^{-1} \mathsf{D}_{\Gamma}
      - \mathsf{D}_{\Gamma} \mathsf{D}_{\Gamma}^{-1} \fd{f}{\mu}
    }
    = 0,
  \end{equation*}
  and so
  \begin{equation*}
    \PB{f}{C}_{\Gamma}(\mu)
    = \ip{\mu}{ \brackets{ \fd{f}{\mu}, \fd{C}{\mu} }_{\Gamma} }
    = 0.
  \end{equation*}
  This also implies that $C$ is an invariant of the Lie--Poisson equation~\eqref{eq:LP} because $\dot{C} = \PB{C}{h}_{\Gamma} = 0$.
\end{proof}
\begin{remark}
  The above invariant $C$ is in fact the Noether invariant $\sum_{i=1}^{N} \Gamma_{i} |z_{i}|^{2}$ of the original system~\eqref{eq:basic} associated with the $\mathbb{S}^{1}$-symmetry (called the angular impulse in the point vortex literature) written in terms of $\mu$.
\end{remark}

\subsection{Invariant Set}
One sees in \eqref{eq:mu} that $\mu \in \u(N)_{\Gamma}^{*}$ has $N + 2{N \choose 2} = N^{2}$ real components, and this seems rather redundant to describe the relative configurations made by the origin and the $N$ vortices, particularly when $N$ becomes large.
Since the original dynamics~\eqref{eq:basic} is $2N$-dimensional, the relative dynamics~\eqref{eq:LP} even has a dimension greater than the original one for $N \ge 3$.
So, at the first sight, it does not seem that the relative dynamics is a reduced dynamics of the original system~\eqref{eq:basic}.

It turns out that the apparent increase in the dimension is compensated by the fact that the dynamics~\eqref{eq:LP} evolves in an invariant set of $\u(N)_{\Gamma}^{*}$:

\begin{proposition}
  Consider the subset of $\u(N)_{\Gamma}^{*}$ consisting of those elements that are rank-one:
  \begin{equation*}
    \u_{1}(N)_{\Gamma}^{*} \defeq \setdef{ \mu \in \u(N)_{\Gamma}^{*} \cong \u(N)_{\Gamma} }{ \rank\mu = 1 }.
  \end{equation*}
  Then:
  \begin{enumerate}[(i)]
  \item $\u_{1}(N)_{\Gamma}^{*}$ is an invariant set of the Lie--Poisson dynamics~\eqref{eq:LP}.  \item $\mu \in \u(N)_{\Gamma}^{*} \backslash \{0\}$ satisfies $\rank \mu = 1$ if and only if all the determinants of the $2 \times 2$ submatrices sweeping the upper triangular part and the subdiagonal of $-\rmi\mu$ (see the picture below) vanish.
    \begin{equation*}
      -\rmi\mu = 
      \begin{bmatrix}
        \tikzmarknode{11}{\mu_{11}} & \tikzmarknode{12}{\mu_{12}} & \tikzmarknode{13}{\mu_{13}} & \tikzmarknode{14}{\vphantom{\mu_{14}}\cdots} & \tikzmarknode{15}{\vphantom{\mu_{15}}\cdots} & \tikzmarknode{16}{\mu_{1N}}  \\ 
        \tikzmarknode{21}{\mu_{21}} & \tikzmarknode{22}{\mu_{22}} & \tikzmarknode{23}{\mu_{23}} & \tikzmarknode{24}{\vphantom{\mu_{24}}\cdots} & \tikzmarknode{25}{\vphantom{\mu_{25}}\cdots} & \tikzmarknode{26}{\mu_{2N}} \\
        \tikzmarknode{31}{\mu_{31}} & \tikzmarknode{32}{\mu_{32}} & \tikzmarknode{33}{\mu_{33}} & \tikzmarknode{34}{\vphantom{\mu_{34}}\cdots} & \tikzmarknode{35}{\vphantom{\mu_{35}}\cdots} & \tikzmarknode{36}{\mu_{3N}} \\
        \tikzmarknode{41}{\mu_{41}} & \tikzmarknode{42}{\mu_{42}} & \tikzmarknode{43}{\mu_{43}} & \tikzmarknode{44}{\vphantom{\mu_{44}}\raisebox{0pt}[0pt][0pt]{$\ddots$}} & \tikzmarknode{45}{\vphantom{\mu_{45}}\cdots} & \tikzmarknode{46}{\mu_{4N}} \\
        \vdots & \vdots & \vdots & \vdots & \tikzmarknode{55}{\vphantom{\raisebox{3pt}{$\mu_{55}$}}\raisebox{0pt}[0pt][0pt]{$\ddots$}} & \tikzmarknode{56}{\vphantom{\raisebox{3pt}{$\mu_{56}$}}\raisebox{0pt}[0pt][0pt]{$\vdots$}} \\
        \mu_{N1} & \mu_{N2} & \mu_{N3} & \dots & \cdots & \tikzmarknode{66}{\mu_{NN}}  \\
      \end{bmatrix}
    \end{equation*}
    \AddToShipoutPictureBG*{%
      \begin{tikzpicture}[overlay,remember picture]
        \node[fit=(11) (22),draw=red,fill=red,fill opacity=0.5]{};
        \node[fit=(12) (23),draw=red,fill=red,fill opacity=0.5]{};
        \node[fit=(13) (24),draw=red,fill=red,fill opacity=0.5]{};
        \node[fit=(14) (25),draw=red,fill=red,fill opacity=0.5]{};
        \node[fit=(15) (26),draw=red,fill=red,fill opacity=0.5]{};
        \node[fit=(22) (33),draw=red,fill=red,fill opacity=0.5]{};
        \node[fit=(23) (34),draw=red,fill=red,fill opacity=0.5]{};
        \node[fit=(24) (35),draw=red,fill=red,fill opacity=0.5]{};
        \node[fit=(25) (36),draw=red,fill=red,fill opacity=0.5]{};
        \node[fit=(33) (44),draw=red,fill=red,fill opacity=0.5]{};
        \node[fit=(34) (45),draw=red,fill=red,fill opacity=0.5]{};
        \node[fit=(35) (46),draw=red,fill=red,fill opacity=0.5]{};
        \node[fit=(44) (55),draw=red,fill=red,fill opacity=0.5]{};
        \node[fit=(45) (46) (56),draw=red,fill=red,fill opacity=0.5]{};
        \node[fit=(55) (66),draw=red,fill=red,fill opacity=0.5]{};
      \end{tikzpicture}%
    }
    Specifically, these determinants are given by
    \begin{equation*}
      \begin{split}
        R_{i}&\colon \u(N)_{\Gamma}^{*} \backslash \{0\} \to \R;
               \quad
               R_{i}(\mu) \defeq
               \begin{vmatrix}
                 \mu_{i} & \mu_{i,i+1} \\
                 \mu_{i,i+1}^{*} & \mu_{i+1}
               \end{vmatrix}
               \quad\text{for}\quad
               1 \le i \le N-1,
        \\
        R_{ij}&\colon \u(N)_{\Gamma}^{*} \backslash \{0\} \to \C;
                \quad
                R_{ij}(\mu) \defeq
                \begin{vmatrix}
                  \mu_{i,j} & \mu_{i,j+1} \\
                  \mu_{i+1,j} & \mu_{i+1,j+1}
                \end{vmatrix}
                \quad\text{for}\quad
                1 \le i < j \le N-1,
      \end{split}
    \end{equation*}
  \item Collect the above determinants to define
    \begin{equation}
      \label{eq:R}
      \begin{split}
        R \colon & \u(N)_{\Gamma}^{*} \backslash \{0\} \to \R^{N-1} \times \C^{N-1 \choose 2} \cong \R^{(N - 1)^{2}}; \\
                 & R(\mu) \defeq (R_{1}(\mu), \dots, R_{N-1}(\mu), R_{12}(\mu), \dots, R_{N-2,N-1}(\mu)).
      \end{split}
    \end{equation}
    Then the invariant set $\u_{1}(N)_{\Gamma}^{*}$ is precisely the level set $R^{-1}(0)$.
  \end{enumerate}
\end{proposition}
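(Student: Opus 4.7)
The plan breaks into three pieces. For (i), the Lie--Poisson equation~\eqref{eq:LP} can be written in the bilinear form
\begin{equation*}
  \dot{\mu} = A(\mu)\,\mu - \mu\,B(\mu),
  \qquad
  A(\mu) \defeq \mathsf{D}_{\Gamma}^{-1}\frac{\delta h}{\delta \mu},
  \qquad
  B(\mu) \defeq \frac{\delta h}{\delta \mu}\mathsf{D}_{\Gamma}^{-1}.
\end{equation*}
Along a solution $\mu(t)$, I would let $P(t), Q(t) \in \GL(N,\C)$ be the fundamental solutions of $\dot{P} = A(\mu(t))\,P$ and $\dot{Q} = Q\,B(\mu(t))$ with $P(0) = Q(0) = I$. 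A direct differentiation shows that $P(t)\,\mu(0)\,Q(t)^{-1}$ satisfies the same ODE, so by uniqueness $\mu(t) = P(t)\,\mu(0)\,Q(t)^{-1}$. Since $P(t)$ and $Q(t)$ are invertible, $\rank \mu(t) = \rank \mu(0)$, and in particular $\u_{1}(N)_{\Gamma}^{*}$ is invariant. Equivalently, the Lie--Poisson flow preserves coadjoint orbits of $\U(N)_{\Gamma}$, on which the action is a $\mathsf{D}_{\Gamma}$-twisted conjugation and hence rank-preserving.

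For (ii), the ``only if'' direction is immediate, since every $2\times 2$ submatrix of a rank-one matrix has vanishing determinant. For the converse, my plan is a propagation argument. Assume $\mu \ne 0$ with all listed $R_{i}$ and $R_{ij}$ vanishing, and choose an index $k$ with $\mu_{k} \ne 0$. The relation $R_{k} = 0$ gives that the principal $(k,k{+}1)$-block of $-\rmi\mu$ has rank at most one, so rows $k$ and $k{+}1$ agree on columns $k, k{+}1$ up to a scalar. The chain $R_{k,k+1} = R_{k,k+2} = \dots = R_{k, N-1} = 0$ then propagates this proportionality column-by-column to the right of $k$, and Hermiticity extends it leftward. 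Iterating the argument with the index stepping up and down, and using Hermiticity once more to reach rows whose diagonal entry vanishes, forces every row of $-\rmi\mu$ to be a scalar multiple of a single nonzero row, i.e., $\rank \mu = 1$.

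Assertion (iii) is then an immediate packaging of (ii): $R$ records precisely the scalars $R_{1}, \dots, R_{N-1}, R_{12}, \dots, R_{N-2,N-1}$, so $R(\mu) = 0$ is the vanishing of the full list of specified minors, which by (ii) is equivalent to $\rank \mu = 1$ on $\u(N)_{\Gamma}^{*} \setminus \{0\}$. This gives $R^{-1}(0) = \u_{1}(N)_{\Gamma}^{*}$.

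The main obstacle I anticipate is the bookkeeping in the propagation step of (ii), particularly the handling of rows where the diagonal entry $\mu_{i}$ vanishes: one cannot directly divide by $\mu_{i}$ to extract a proportionality constant, so the argument must bootstrap from a row with nonzero diagonal entry and then rely on Hermitian symmetry to transfer the proportionality to the degenerate rows, rather than running a uniform induction over all rows.
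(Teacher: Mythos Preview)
Your proposal is sound and, for parts (ii)--(iii), is actually more explicit than the paper: the paper defers (ii) entirely to a lemma in a companion article and deduces (iii) from it in one line, whereas you sketch a direct propagation argument. The obstacle you flag---rows with vanishing diagonal entry---is real; your plan to bootstrap from a row with $\mu_{k} \ne 0$ via Hermitian symmetry is the right move, and in the paper's intended setting (where $-\rmi\mu$ arises as $zz^{*}$ and is hence positive semidefinite) such a $k$ always exists when $\mu \ne 0$.

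For part (i) you take a genuinely different route. The paper lifts through the momentum map: given $\mu(0) \in \u_{1}(N)_{\Gamma}^{*}$, it picks $z(0)$ with $\mathbf{J}(z(0)) = \mu(0)$, evolves $z(t)$ by~\eqref{eq:basic}, and observes that $t \mapsto \mathbf{J}(z(t)) = \rmi\,z(t)z(t)^{*}$ solves~\eqref{eq:LP} with the same initial data, so uniqueness forces $\mu(t) = \rmi\,z(t)z(t)^{*}$ and hence $\rank\mu(t) = 1$. You instead stay on $\u(N)_{\Gamma}^{*}$: freezing the coefficients along a given solution turns~\eqref{eq:LP} into a linear matrix ODE, and the solution is obtained from $\mu(0)$ by left and right multiplication by invertible fundamental matrices, so rank is preserved. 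Your argument is self-contained (no appeal to $\mathbf{J}$ or to the original dynamics) and yields the stronger conclusion that \emph{every} rank stratum is invariant, not just the rank-one set. One small correction: with $\nu(t) = P(t)\,\mu(0)\,Q(t)^{-1}$ you need $\dot{Q} = B(\mu(t))\,Q$ rather than $\dot{Q} = Q\,B(\mu(t))$ for the differentiation to produce $\dot{\nu} = A\nu - \nu B$; this is a harmless ordering slip and does not affect the idea.
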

\begin{proof}
  \begin{enumerate}[(i)]
  \item The proof goes as in \cite[Proposition~4.1]{Point_Vortex_Stability-Plane}.
    Let $t \mapsto z(t)$ be the solution of the initial value problem of the original evolution equations~\eqref{eq:basic}.
    Then, $\mu(0) \defeq \mathbf{J}(z(0)) = \rmi\,z(0) z(0)^{*}$ gives the corresponding initial condition for the Lie--Poisson equation~\eqref{eq:LP}.
    Now, notice that both $t \mapsto \mu(t)$ and $t \mapsto \mathbf{J}(z(t)) = \rmi\,z(t) z(t)^{*}$ satisfy the same initial value problem for the Lie--Poisson equation~\eqref{eq:LP}.
    Hence by uniqueness we have $\mu(t) = \rmi\,z(t) z(t)^{*}$, and so $\rank \mu(t) = 1$, i.e., $\mu(t) \in \u_{1}(N)_{\Gamma}^{*}$ for every $t$.
  \item This is proved in \citet[Lemma~4.2]{Point_Vortex_Stability-Plane}.
  \item It follows easily from the above.
  \end{enumerate}
\end{proof}

\begin{remark}
  Those results from \cite{Point_Vortex_Stability-Plane} mentioned above are for the relative dynamics of point vortices on the plane \textit{with} the translational invariance in addition to the rotational invariance.
  So the relative dynamics in the present paper is slightly different from that from \cite{Point_Vortex_Stability-Plane}.
  However they are both Lie--Poisson dynamics (with different Lie algebras with similar structures), and so a similar argument applies to the present setting.
\end{remark}

Since $\mu$ has $N^{2}$ real components and $R$ gives $(N-1)^{2}$ real components, the invariant set $R^{-1}(0)$ has the dimension $N^{2} - (N-1)^{2} = 2N - 1$---fewer than the original dimension $2N$.
Additionally, if the Casimir $C$ is independent of $R$, the effective dimension of the relative dynamics is $2N - 2$, which is what one would expect from the symplectic reduction theory~\cite{MaWe1974} in the presence of the $\mathbb{S}^{1}$-symmetry.

\subsection{Relative Equilibria}
The relative dynamics governed by the Lie--Poisson equation~\eqref{eq:LP} describes the time evolution of the shape made by the vortices and the origin regardless of its rotational orientation; see \Cref{fig:Symmetry}.
This implies that a fixed point in the Lie--Poisson equation~\eqref{eq:LP} corresponds to a \textit{relative equilibrium}, i.e., a solution to the original $N$-vortex equation~\eqref{eq:basic} in which the vortices undergo a rigid rotation about the origin without changing its relative configurations.

Therefore, one may analyze the stability of a relative equilibrium of \eqref{eq:basic} by analyzing the stability of the corresponding \textit{fixed point} in the Lie--Poisson equation~\eqref{eq:LP}.
We shall show a few examples of such applications in \Cref{sec:Applications} below.

\subsection{Nonlinear Stability of Relative Equilibria}
\label{ssec:nonlinear_stability}
For the Lyapunov stability, we may adapt the Energy--Casimir-type method proved in \citet[Theorem~5.2]{Point_Vortex_Stability-Plane} to the current setting as follows:

Let us first note that we shall identify $\u(N)_{\Gamma}^{*}$ with $\R^{N^{2}}$ in what follows.
Let $\mu_{0} \in R^{-1}(0)$ be a fixed point of the relative dynamics~\eqref{eq:LP}, and assume that $C$ and $R$ are independent at $\mu_{0}$.
Suppose that there exist $a_{0} \in \R\backslash\{0\}$, $a_{1} \in \R$, $\{ b_{i} \in \R \}_{i=1}^{N-1}$, and $\{ (c_{ij}, d_{ij}) \in \R^{2} \}_{1\le i < j \le N-1}$ such that
\begin{align*}
  f(\mu) &\defeq a_{0} h(\mu) + a_{1} C(\mu) + \sum_{i=1}^{N-1} b_{i} R_{i}(\mu) \\
         &\qquad + \sum_{1\le i < j \le N-1} (c_{ij} \Re R_{ij}(\mu) + d_{ij} \Im R_{ij}(\mu))
\end{align*}
satisfies the following:
\begin{enumerate}[(i)]
\item $Df(\mu_{0}) = 0$; and
  \smallskip
\item the Hessian $\mathsf{H} \defeq D^{2}f(\mu_{0})$ is positive definite on the tangent space $T_{\mu_{0}}M$ at $\mu_{0}$ of the level set
  \begin{align*}
    M &\defeq \setdef{ \mu \in \u(N)_{\Gamma}^{*} \cong \R^{N^{2}} }{ R(\mu) = 0,\ C(\mu) = C(\mu_{0}) } \\
      &= R^{-1}(0) \cap C^{-1}(C(\mu_{0}))
  \end{align*}
  i.e., $v^{T} \mathsf{H} v > 0$ for every $v \in \R^{N^{2}} \backslash\{0\}$ such that $v \in \ker DC(\mu_{0}) \cap \ker DR(\mu_{0})$.
\end{enumerate}
Then $\mu_{0}$ is Lyapunov stable.

\begin{remark}
  The above criteria are a sufficient condition for the Hamiltonian $h$ to have a local minimum at $\mu_{0}$ in the level set $M$.
\end{remark}

\section{Applications}
\label{sec:Applications}
We consider applications of the above relative dynamics to those cases with $N = 2, 3, 4$.
The main application is to find relative equilibria and analyze their stability.
As mentioned above, we shall do so by finding fixed points in the Lie--Poisson equation~\eqref{eq:LP} and analyzing their stability as fixed points.

\subsection{Relative Equilibria and Stability with $N = 2$}
\label{ssec:Dumbbell}
Since relative equilibria in the vortex dipole case ($N = 2$ and $\Gamma_{1} = -\Gamma_{2}$) are studied in detail by \citet{ToKeFrCaScHa2011}, we shall consider the same-sign case with $\Gamma_{1} = \Gamma_{2}$, which is studied numerically in \citet{MuGrKuSi2016}.

\subsubsection{Relative Equilibria}
We shall identify $\u(2)_{\Gamma}^{*}$ with $\R^{4}$ using the coordinates $(\mu_{1}, \mu_{2}, \mu_{3}, \mu_{4})$ from \eqref{eq:mu-N=2}, that is,
\begin{equation*}
  \mu = \rmi
  \begin{bmatrix}
    \mu_{1} & \mu_{12} \\
    \mu_{12}^{*} & \mu_{2}
  \end{bmatrix}
  = \rmi
  \begin{bmatrix}
    \mu_{1} & \mu_{3} + \rmi \mu_{4} \\
    \mu_{3} - \rmi \mu_{4} & \mu_{2}
  \end{bmatrix}.
\end{equation*}
Then the Lie--Poisson equation~\eqref{eq:LP} gives
\begin{gather*}
  \dot{\mu}_{1} = 2c \Gamma_{1} \frac{\mu_{4}}{\mu_{1} + \mu_{2} - 2\mu_{3}},
  \quad
  \dot{\mu}_{2} = -2c \Gamma_{1} \frac{\mu_{4}}{\mu_{1} + \mu_{2} - 2\mu_{3}},
  \quad
  \dot{\mu}_{3} = \Gamma_{1} \frac{(\mu_{1} - \mu_{2})\mu_{4}}{(1 - \mu_{1})(1 - \mu_{2})},
  \\
  \dot{\mu}_{4} = -\Gamma_{1} \parentheses{
    c \frac{\mu_{1} - \mu_{2}}{\mu_{1} + \mu_{2} - 2\mu_{3}}
    + \parentheses{
      \frac{1}{1 - \mu_{1}} - \frac{1}{1 - \mu_{2}}
    } \mu_{3}
  }.
\end{gather*}
Therefore, $(\mu_{1}, \mu_{2}, \mu_{3}, \mu_{4})$ is a fixed point of \eqref{eq:LP} if and only if
\begin{equation}
  \label{eq:fixedpt-condition_N=2}
  c \frac{\mu_{1} - \mu_{2}}{\mu_{1} + \mu_{2} - 2\mu_{3}}
  + \parentheses{
    \frac{1}{1 - \mu_{1}} - \frac{1}{1 - \mu_{2}}
  } \mu_{3} = 0
  \quad\text{and}\quad
  \mu_{4} = 0.
\end{equation}

Recall from \eqref{eq:mu-shapevars_N=2} (see also \Cref{fig:Dumbbell}) that one may write
\begin{equation*}
  \mu_{1} = r_{1}^{2},
  \qquad
  \mu_{2} = r_{2}^{2},
  \qquad
  \mu_{3} = r_{1} r_{2} \cos\theta,
  \qquad
  \mu_{4} = r_{1} r_{2} \sin\theta.
\end{equation*}
The second condition $\mu_{4} = 0$ then implies that the fixed points necessarily take the form
\begin{equation}
  \label{eq:DumbbellRelEq}
  (\mu_{1}, \mu_{2}, \mu_{3}, \mu_{4}) = \left( r_{1}^{2}, r_{2}^{2}, \pm r_{1} r_{2}, 0 \right),
\end{equation}
including the special case where either $r_{1}$ or $r_{2}$ vanishes.
In fact, $\mu_{4} = r_{1} r_{2} \sin\theta = 0$ implies either (i)~$r_{i} \neq 0$ with $i = 1,2$ and $\sin\theta = 0$, or (ii)~$r_{1} r_{2} = 0$.
In the first case, $\cos\theta = \pm1$, and hence we have $\mu_{3} = r_{1} r_{2} \cos\theta = \pm r_{1} r_{2}$ as shown above.
In the second case, $\mu_{3} = r_{1} r_{2} \cos\theta = 0$, but then since $r_{1} r_{2} = 0$, one may write $\mu_{3} = \pm r_{1} r_{2}$ in this case as well.
Therefore, the the expression in \eqref{eq:DumbbellRelEq} applies to both cases.

Let us first consider the case with $\mu_{3} = r_{1} r_{2}$, i.e., the two vortices are both on a half line emanating from the origin to infinity.
Given that $r_{1}, r_{2} \in [0,1)$ and $r_{1} \neq r_{2}$ (no collisions), the first equation from \eqref{eq:fixedpt-condition_N=2} is then equivalent to
\begin{equation*}
  c (1 - r_{1}^{2}) (1 - r_{2}^{2}) = - r_{1} r_{2} (r_{1} - r_{2})^{2},
\end{equation*}
but then this is impossible since $r_{1}, r_{2} \in [0,1)$ and $c > 0$.
Hence there is no fixed point with $\mu_{3} = r_{1} r_{2}$.

Next consider the other case with $\mu_{3} = -r_{1} r_{2}$, i.e., the two vortices are on a line passing through the origin and are on the opposite sides of the origin; see \Cref{fig:DumbbellRelEq}.
\begin{figure}[htbp]
  \centering
  \includegraphics[width=.35\linewidth]{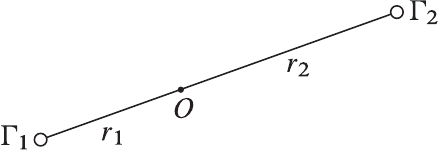}
  \caption{
    Relative equilibria for two same-sign vortices.
    The vortices are on a single line passing through the origin and are on the opposite sides of the origin; it is a relative equilibrium if $r_{1}$ and $r_{2}$ satisfy \eqref{eq:DumbbellCond}. 
  }
  \label{fig:DumbbellRelEq}
\end{figure}
A similar calculation assuming $r_{1} + r_{2} > 0$ (no collisions at the origin) shows that the first equation from \eqref{eq:fixedpt-condition_N=2} is equivalent to
\begin{equation}
  \label{eq:DumbbellCond}
  r_{1} = r_{2}
  \quad\text{or}\quad
  c (1 - r_{1}^{2}) (1 - r_{2}^{2}) = r_{1} r_{2} (r_{1} - r_{2})^{2}.
\end{equation}
See \Cref{fig:DumbbellCond} for the set of points in $(r_{1},r_{2})$-plane satisfying those conditions.

\begin{figure}[htbp]
  \centering
  \includegraphics[width=.35\linewidth]{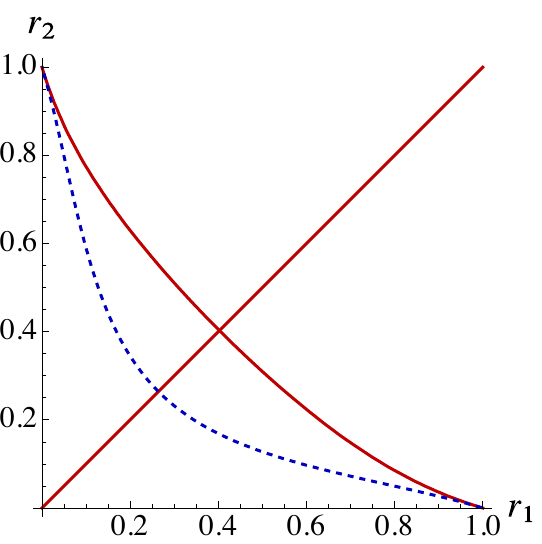}
  \caption{
    The solid red curves indicate the set of pairs of $(r_{1},r_{2})$ satisfying \eqref{eq:DumbbellCond}, under which \eqref{eq:DumbbellRelEq} gives relative equilibria for the same-sign case $\Gamma_{1} = \Gamma_{2}$.
    The dashed blue curve shows the same thing ($c (1 - r_{1}^{2}) (1 - r_{2}^{2}) = r_{1} r_{2} (2 - r_{1}^{2} - r_{2}^{2})$ from \cite[Eq.~(23)]{ToKeFrCaScHa2011}) for the dipole case $\Gamma_{1} = -\Gamma_{2}$.
    Both with $c = 0.15$.
  }
  \label{fig:DumbbellCond}
\end{figure}

Unlike the previous case, there are infinitely many pairs $(r_{1}, r_{2})$ satisfying the second equation.
The second equation is similar to the corresponding equation for the dipole case, while the dipole case does not have fixed points with $r_{1} = r_{2}$; see \cite[Eq.~(23)]{ToKeFrCaScHa2011}.

\subsubsection{Stability of Relative Equilibria}
We may analyze the stability of the above equilibria using both linear and nonlinear stability analysis.

\begin{proposition}
  \label{prop:stability-Dumbbell}
  Consider the relative equilibria for a pair of vortices of the same sign found above (as in \Cref{fig:DumbbellCond}):
  \begin{equation*}
    \mu_{0} = \left( r_{1}^{2}, r_{2}^{2}, - r_{1} r_{2}, 0 \right) \text{ with }
    \left\{
      \begin{array}{l}
        A:~r_{1} = r_{2} \text{ or}\smallskip\\
        B:~c (1 - r_{1}^{2}) (1 - r_{2}^{2}) = r_{1} r_{2} (r_{1} - r_{2})^{2}.
      \end{array}
    \right.
  \end{equation*}
  \begin{enumerate}[(i)]
  \item Relative equilibrium~A is Lyapunov stable if $c(1 - r_{1}^{2})^{2} > 4r_{1}^{4}$ and unstable if $c(1 - r_{1}^{2})^{2} < 4r_{1}^{4}$.
  \item Defining
    \begin{equation}
      \label{eq:F}
      F_{1}(r_{1},r_{2}) \defeq \left(r_{1}^{2} + r_{2}^{2}\right)^{2} + \left(r_{1}^{2} + r_{2}^{2}\right)\left(r_{1}^{2}r_{2}^{2} + 4 r_{1} r_{2} - 3\right) + 2(r_{1} - r_{2})^{2},
    \end{equation}
    relative equilibria~B is Lyapunov stable if $F_{1}(r_{1},r_{2}) < 0$ and unstable if $F_{1}(r_{1},r_{2}) > 0$.
  \end{enumerate}
  See \Cref{fig:DumbbellStabCond} for the stable and unstable conditions.
\end{proposition}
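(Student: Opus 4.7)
The plan is to apply the Energy--Casimir criterion from \Cref{ssec:nonlinear_stability} for the Lyapunov-stable directions, and to linearize the Lie--Poisson equation~\eqref{eq:LP} for the unstable directions. Because $N = 2$, the auxiliary functional reduces to
\begin{equation*}
  f(\mu) = a_{0}\, h(\mu) + a_{1}\, C(\mu) + b_{1}\, R_{1}(\mu),
\end{equation*}
with the single rank-one invariant $R_{1}(\mu) = \mu_{1}\mu_{2} - \mu_{3}^{2} - \mu_{4}^{2}$ and the Casimir $C(\mu) = \Gamma_{1}\mu_{1} + \Gamma_{2}\mu_{2}$; no $R_{ij}$-terms appear when $N = 2$.

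First I would compute $Dh(\mu_{0})$, $DC(\mu_{0})$, and $DR_{1}(\mu_{0})$ explicitly in the coordinates $(\mu_{1},\mu_{2},\mu_{3},\mu_{4})$. At either type of equilibrium the $\mu_{4}$-derivative of each of $h$, $C$, $R_{1}$ vanishes at $\mu_{0}$ (because $\mu_{4} = 0$ and $\tpd{h}{\mu_{4}} \equiv 0$), so the fourth component of $Df(\mu_{0}) = 0$ is automatic; the remaining three scalar equations then determine the ratios $(a_{0}:a_{1}:b_{1})$. Crucially, the compatibility of this $3 \times 3$ system is precisely the equilibrium condition~\eqref{eq:fixedpt-condition_N=2}, so condition A or condition B enters at this step rather than as an extra assumption, and one obtains explicit rational expressions for $a_{1}/a_{0}$ and $b_{1}/a_{0}$ in terms of $r_{1}, r_{2}, c$.

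Next I would form the $4 \times 4$ Hessian $\mathsf{H} \defeq D^{2} f(\mu_{0})$ and restrict it to the two-dimensional tangent space $T_{\mu_{0}} M = \ker DC(\mu_{0}) \cap \ker DR_{1}(\mu_{0})$. A convenient basis is $\{e_{4},\,w\}$, where $e_{4} = \partial/\partial\mu_{4}$ automatically lies in $T_{\mu_{0}} M$ (both $\tpd{C}{\mu_{4}}$ and $\tpd{R_{1}}{\mu_{4}}$ vanish at $\mu_{0}$), and $w$ is any vector in $\R^{3} \times \{0\}$ orthogonal to both $DC(\mu_{0})$ and $DR_{1}(\mu_{0})$. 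This reduces positive-definiteness to the sign of the $2 \times 2$ determinant $\det\bigl[\ip{v_{i}}{\mathsf{H} v_{j}}\bigr]_{i,j=1,2}$. After substituting out $c$ via the respective equilibrium condition, I expect this determinant to be a positive multiple of $c(1-r_{1}^{2})^{2} - 4 r_{1}^{4}$ in case A and of $-F_{1}(r_{1},r_{2})$ in case B, with the sign of $a_{0}$ chosen so that $\mathsf{H}|_{T_{\mu_{0}} M}$ is positive definite on the stable side.

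For the unstable directions I would linearize~\eqref{eq:LP} at $\mu_{0}$. Because $C$ and $R_{1}$ are preserved by the flow, the linearization restricts to a $2 \times 2$ operator $L$ on $T_{\mu_{0}} M$; Hamiltonianity forces $\tr L = 0$, so the eigenvalues are $\pm\sqrt{-\det L}$, and spectral instability is equivalent to $\det L < 0$. The stability transition therefore occurs exactly when the sign condition that governs definiteness of $\mathsf{H}|_{T_{\mu_{0}} M}$ flips, which yields the sharp ``if/if'' statements in the proposition. The main technical obstacle will be the algebraic simplification in case B: the entries of $\mathsf{H}$ and $L$ are rational in $r_{1}, r_{2}, c$, and extracting the clean combination $F_{1}(r_{1},r_{2})$ by using the coupling $c(1-r_{1}^{2})(1-r_{2}^{2}) = r_{1} r_{2} (r_{1}-r_{2})^{2}$ to eliminate $c$ will require patient but essentially routine bookkeeping rather than any further conceptual input.
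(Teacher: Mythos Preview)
Your proposal is correct and matches the paper's proof almost step for step: the paper uses exactly the functional $f = a_{0}h + a_{1}C + b_{1}R_{1}$, solves $Df(\mu_{0})=0$ for $a_{1}/a_{0}$ and $b_{1}/a_{0}$, and checks positive-definiteness of $D^{2}f(\mu_{0})$ on $T_{\mu_{0}}M = \ker DC(\mu_{0}) \cap \ker DR_{1}(\mu_{0})$ using the same basis $\{w,e_{4}\}$ you describe. Two small points worth knowing in advance: (a) the restricted Hessian turns out to be \emph{diagonal} in that basis, so you will be checking two scalar signs rather than a $2\times2$ determinant; (b) for instability the paper simply computes the eigenvalues of the full $4\times4$ linearization (two vanish, two are $\pm\sqrt{\cdots}$) rather than first restricting to $T_{\mu_{0}}M$ and invoking $\tr L = 0$---your route is equivalent, but be careful that it is the invariance of the level set $R_{1}^{-1}(0)$, not conservation of $R_{1}$ itself, that justifies the restriction.
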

\begin{figure}[htbp]
  \centering
  \begin{subcaptionblock}[c]{0.45\textwidth}
    \centering
    \includegraphics[width=.9\linewidth]{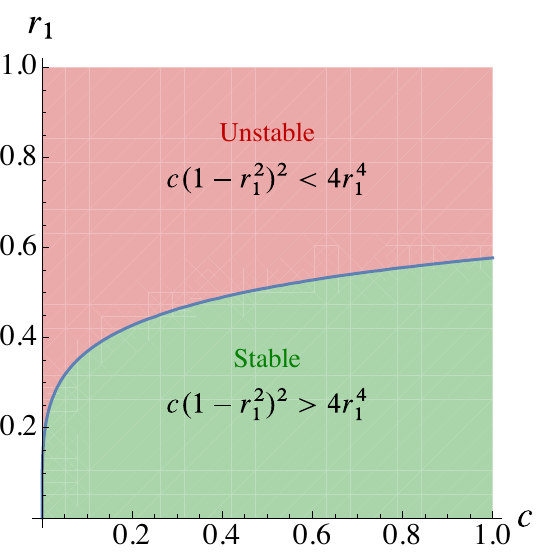}
    \caption{Relative equilibrium~A}
    \label{fig:DumbbellStabCond1}
  \end{subcaptionblock}
  \hfill
  \begin{subcaptionblock}[c]{0.45\textwidth}
    \centering
    \includegraphics[width=.9\linewidth]{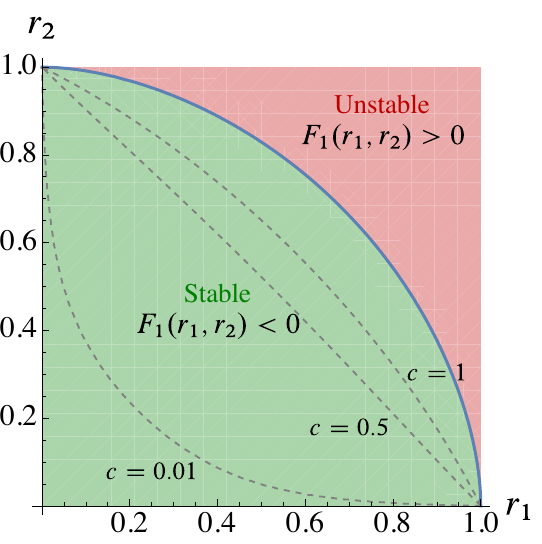}
    \caption{Relative equilibrium~B}
    \label{fig:DumbbellStabCond2}
  \end{subcaptionblock}
  \caption{
    Stability conditions from \Cref{prop:stability-Dumbbell}.
    \subref{fig:DumbbellStabCond1}~Stable and unstable domains in $(c,r_{1})$-plane of relative equilibrium~A.
    One sees that, for each value of $c$, there is an upper threshold for $r_{1}$ so that the relative equilibrium~A is stable.
    \subref{fig:DumbbellStabCond2}~The dashed curves are the pairs $(r_{1},r_{2})$ satisfying the condition $c (1 - r_{1}^{2}) (1 - r_{2}^{2}) = r_{1} r_{2} (r_{1} - r_{2})^{2}$ for relative equilibrium~B for $c = 0.01, 0.5, 0.1$.
    The blue solid curve is the contour $F_{1}(r_{1},r_{2}) = 0$.
    One sees that, regardless of the value of $c$, every relative equilibrium~B is stable.
    }
  \label{fig:DumbbellStabCond}
\end{figure}
\begin{proof}[Proof or \Cref{prop:stability-Dumbbell}]
  \begin{enumerate}[(i)]
  \item Let us first show the instability by linear stability analysis.
    Linearizing the Lie--Poisson equation \eqref{eq:LP} at relative equilibrium A gives a linear system in $\R^{4}$ with the matrix
    \begin{equation*}
      \frac{\Gamma_{1}}{4 r_{1}^{2} (1 - r_{1}^{2})^{2}}
      \begin{bmatrix}
        0 & 0 & 0 & g_{1} \\
        0 & 0 & 0 & -g_{1} \\
        0 & 0 & 0 & 0 \\
        g_{2} & -g_{2} & 0 & 0
      \end{bmatrix}
      \text{ with }
      \left\{
        \begin{array}{l}
          g_{1} \defeq 2c (1 - r_{1}^{2})^{2}, \medskip\\
          g_{2} \defeq 4r_{1}^{4} - c(1 - r_{1}^{2})^{2}.
        \end{array}
      \right.
    \end{equation*}
    Its eigenvalues are
    \begin{equation*}
      0,\, 0,\, \pm \frac{\sqrt{c}\,\Gamma_{1}}{2r_{1}^{2}(1 - r_{1}^{2})}\,\sqrt{g_{2}},
    \end{equation*}
    and so relative equilibria~A is unstable if $g_{2} > 0$.

    For the Lyapunov stability, we shall use the Energy--Casimir-type method from \Cref{ssec:nonlinear_stability}:
    Stetting
    \begin{equation*}
      f(\mu) = a_{0} h(\mu) + a_{1} C(\mu) + b_{1} R(\mu),
    \end{equation*}
    where $h$ is given in \eqref{eq:h-N=2} (with $\Gamma_{2} = \Gamma_{1}$ here) and
    \begin{equation*}
      C(\mu) = \Gamma_{1} (\mu_{1} + \mu_{2}),
      \qquad
      R(\mu) = \det(-\rmi\mu) = \mu_{1} \mu_{2} - \mu_{3}^{2} - \mu_{4}^{2}.
    \end{equation*}
    Then one sees that $Df(\mu_{0}) = 0$ if
    \begin{equation*}
      a_{1} =  \Gamma_{1} \frac{c(1 - r_{1}^{2}) + 2r_{1}^{2}}{4 r_{1}^{2}(1 - r_{1}^{2})}\, a_{0},
      \qquad
      b_{1} = -\Gamma_{1}^{2} \frac{c}{8 r_{1}^{4}}\, a_{0}.
    \end{equation*}
    In order to find the tangent space to the level set $M = R^{-1}(0) \cap C^{-1}(C(\mu_{0}))$, we compute
    \begin{equation*}
      DC(\mu) = \Gamma_{1}
      \begin{bmatrix}
        1 \\
        1 \\
        0 \\
        0
      \end{bmatrix},
      \qquad
      DR(\mu) =
      \begin{bmatrix}
        \mu_{2} \\
        \mu_{1} \\
        -2\mu_{3} \\
        -2\mu_{4}
      \end{bmatrix}
      \implies
      DR(\mu_{0}) = r_{1}^{2}
      \begin{bmatrix}
        1 \\
        1 \\
        -2 \\
        0
      \end{bmatrix}.
    \end{equation*}
    Then the tangent space $T_{\mu_{0}}M$ is given by
    \begin{equation*}
      T_{\mu_{0}}M = \ker DC(\mu_{0}) \cap \ker DR(\mu_{0})
      = \Span\braces{
        v_{1} \defeq 
        \begin{bmatrix}
          1 \\
          -1 \\
          0 \\
          0
        \end{bmatrix},\,
        v_{2} \defeq \begin{bmatrix}
          0 \\
          0 \\
          0 \\
          1
        \end{bmatrix}
      }.
    \end{equation*}
    Then, defining the $2 \times 2$ matrix $\mathcal{H}$ by setting $\mathcal{H}_{ij} \defeq v_{i}^{T} D^{2}f(\mu_{0}) v_{j}$, the matrix $\mathcal{H}$ is the diagonal matrix with diagonal entries
    \begin{align*}
      \frac{c\,\Gamma_{1}^{2}}{4 r_{1}^{4}}\, a_{0}
      \quad\text{and}\quad
      -\frac{c\,\Gamma_{1}^{4}}{16 r_{1}^{8} (1 - r_{1}^{2})^{2}}\, a_{0}^{2}\,g_{2}.
    \end{align*}
    Since $c > 0$, $\Gamma_{1} \neq 0$, and $0 < r_{1} < 1$, we take an arbitrary $a_{0} > 0$; then both become positive if $g_{2} < 0$.
  \item Proceeding the same way as above, the linearization at relative equilibrium~B yields a linear system in $\R^{4}$ with eigenvalues
    \begin{equation*}
      0,\, 0,\, \pm \frac{(r_{1} - r_{2})\Gamma_{1}}{(1 - r_{1}^{2})^{3/2}(1 - r_{2}^{2})^{3/2}}\,\sqrt{F_{1}(r_{1},r_{2})}
    \end{equation*}
    using the function $F_{1}$ defined in \eqref{eq:F}.

    The nonlinear analysis proceeds in a similar way too.
    One sees that $Df(\mu_{0}) = 0$ if
    \begin{equation*}
      a_{1} = \Gamma_{1} \frac{(1 + r_{1}r_{2})\Gamma_{1}}{2(1 - r_{1}^{2})(1 - r_{2}^{2})}\, a_{0},
      \qquad
      b_{1} = - \frac{\Gamma_{1}^{2}}{2(1 - r_{1}^{2})(1 - r_{2}^{2})}\, a_{0}.
    \end{equation*}
    We also see that
    \begin{equation*}
      DR(\mu_{0}) = 
      \begin{bmatrix}
        r_{2}^{2} \\
        r_{1}^{2} \\
        2 r_{1} r_{2} \\
        0
      \end{bmatrix},
    \end{equation*}
    and thus
    \begin{equation*}
      T_{\mu_{0}}M = \ker DC(\mu_{0}) \cap \ker DR(\mu_{0})
      = \Span\braces{
        v_{1} \defeq
        \frac{1}{r_{1}^{2} - r_{2}^{2}}
        \begin{bmatrix}
          2 r_{1} r_{2} \\
          -2 r_{1} r_{2} \\
          r_{1}^{2} - r_{2}^{2} \\
          0
        \end{bmatrix},\,
        v_{2} \defeq \begin{bmatrix}
          0 \\
          0 \\
          0 \\
          1
        \end{bmatrix}
      }.
    \end{equation*}
    Then $\mathcal{H}$ is the diagonal matrix with diagonal entries
    \begin{equation*}
      \frac{\Gamma_{1}^{2}}{(1 - r_{1}^{2})(1 - r_{2}^{2})}\,a_{0},
      \quad\text{and}\quad
      -\frac{\Gamma_{1}^{4}}{(r_{1} + r_{2})^{2}(1 - r_{1}^{2})^{3}(1 - r_{2}^{2})^{3}}\,a_{0}^{2}\, F_{1}(r_{1}, r_{2}).
    \end{equation*}
    Taking an arbitrary $a_{0} > 0$, we see that both become positive if $F_{1}(r_{1}, r_{2}) < 0$.
  \end{enumerate}
\end{proof}

\subsection{Relative Equilibria and Stability with $N = 3$}
\subsubsection{Equilateral Relative Equilibrium}
As an example of a relative equilibrium with $N = 3$, consider the equilateral relative equilibrium of three vortices; see \Cref{fig:EquilateralRelEq}.
\begin{figure}[htbp]
  \centering
  \includegraphics[width=.35\linewidth]{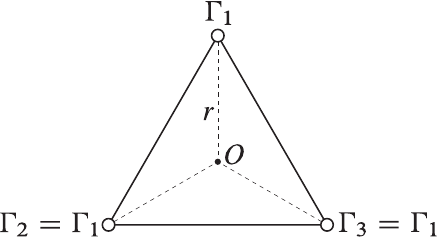}
  \caption{
    The equilateral configuration with the origin at the center is a relative equilibrium if and only if $\Gamma_{1} = \Gamma_{2} = \Gamma_{3}$.
  }
  \label{fig:EquilateralRelEq}
\end{figure}

For $N = 3$, we may use coordinates $(\mu_{1}, \dots, \mu_{9}) \in \R^{9}$ to write
\begin{equation*}
  \mu = \rmi
    \begin{bmatrix}
      \mu_{1} & \mu_{12} & \mu_{13} \\
      \mu_{12}^{*} & \mu_{2} & \mu_{23} \\
      \mu_{13}^{*} & \mu_{23}^{*} & \mu_{3}
    \end{bmatrix}
    = \rmi
    \begin{bmatrix}
      \mu_{1} & \mu_{4} + \rmi \mu_{5} & \mu_{6} + \rmi\mu_{7} \\
      \mu_{4} - \rmi \mu_{5} & \mu_{2} & \mu_{8} + \rmi\mu_{9} \\
      \mu_{6} - \rmi\mu_{7} & \mu_{8} - \rmi\mu_{9} & \mu_{3} \\
    \end{bmatrix}.
\end{equation*}
Then the equilateral relative equilibrium pictured in \Cref{fig:EquilateralRelEq} (with $\Gamma_{1} = \Gamma_{2} = \Gamma_{3}$) corresponds to the fixed point $\mu = \mu_{0}$ of the Lie--Poisson equation~\eqref{eq:LP} in which
\begin{equation*}
  \mu_{i} = |z_{i}|^{2} = r^{2} \quad \forall i \in \{1, 2, 3\},
  \qquad
  \mu_{12}^{*} = \mu_{23}^{*} = \mu_{13} = z_{1} z_{3}^{*} = r^{2} e^{\rmi(2\pi/3)}
\end{equation*}
or equivalently, in terms of $(\mu_{1}, \dots, \mu_{9}) \in \R^{9}$,
\begin{equation*}
  \mu_{0} =
  \parentheses{ r^{2}, r^{2}, r^{2}, -\frac{r^{2}}{2}, -\frac{\sqrt{3}}{2} r^{2}, -\frac{r^{2}}{2}, \frac{\sqrt{3}}{2} r^{2}, -\frac{r^{2}}{2}, -\frac{\sqrt{3}}{2} r^{2} }.
\end{equation*}

\subsubsection{Stability of Equilateral Relative Equilibrium}
Since the linear stability analysis is performed for the more general $N$-ring cases in \citet{KoKeCa2014}, we shall only briefly touch on our linear analysis just to confirm that it reproduces the same result.
Indeed, the linearization of the Lie--Poisson equation~\eqref{eq:LP} at the above equilibrium $\mu_{0}$ yields a linear system with eigenvalues
\begin{equation*}
  0,\, 0,\, 0,\,
  \pm\rmi\frac{c\,\Gamma_{1}}{r^{2}},
  \,
  \pm\frac{\sqrt{c}\,\Gamma_{1}}{r^{2}(1 - r^{2})} \sqrt{ F_{2}(c,r) }\,
\end{equation*}
with
\begin{equation*}
  F_{2}(c,r) \defeq 2r^{4} - c(1 - r^{2})^{2},
\end{equation*}
and each of the last pair of conjugate eigenvalues has algebraic multiplicity 2.
This indicates that the fixed point is unstable (and hence so is the relative equilibrium) if $F_{2}(c,r) > 0$, as in \cite[Theorem~3.1]{KoKeCa2014} for $N = 3$.

It is also shown in \cite[Theorem~3.1]{KoKeCa2014} that the relative equilibrium is stable if $F_{2}(c,r) < 0$ by linear analysis.
We shall perform a nonlinear stability analysis using the Energy--Casimir-type method from \Cref{ssec:nonlinear_stability} to show that the fixed point is Lyapunov stable.
Specifically, we have
\begin{equation*}
  f(\mu) = a_{0} h(\mu) + a_{1} C(\mu) + \sum_{i=1}^{2} b_{i} R_{i}(\mu) + c_{12} \Re R_{12}(\mu) + d_{12} \Im R_{12}(\mu),
\end{equation*}
where $h$ is given in \eqref{eq:h} and
\begin{gather*}
  C(\mu) = \Gamma_{1} (\mu_{1} + \mu_{2} + \mu_{3}),
  \\
  R_{1}(\mu) = \mu_{1} \mu_{2} - \mu_{4}^{2} - \mu_{5}^{2},
  \qquad
  R_{2}(\mu) = \mu_{2} \mu_{3} - \mu_{8}^{2} - \mu_{9}^{2},
  \\
  R_{12}(\mu) =
  \begin{vmatrix}
    \mu_{4} + \rmi \mu_{5} & \mu_{6} + \rmi\mu_{7} \\
    \mu_{2} & \mu_{8} + \rmi\mu_{9}
  \end{vmatrix}.
\end{gather*}
Then one sees that $Df(\mu_{0}) = 0$ if
\begin{equation*}
  a_{1} = \Gamma_{1} \frac{c(1 - r^{2}) + r^{2}}{2 r^{2}(1 - r^{2})}\, a_{0},
  \qquad
  b_{1} = b_{2} = -\Gamma_{1}^{2} \frac{c}{6 r^{4}}\, a_{0},
  \qquad
  c_{12} = \Gamma_{1}^{2} \frac{c}{3 r^{4}}\, a_{0},
  \qquad
  d_{12} = 0.
\end{equation*}
Now, writing $R = (R_{1}, R_{2}, \Re R_{12}, \Im R_{12})$ and setting $M \defeq R^{-1}(0) \cap C^{-1}(C(\mu_{0}))$, it is straightforward calculations to see that a basis for $T_{\mu_{0}}M$ is given by
\begin{gather*}
  v_{1} \defeq \sqrt{3} (e_{1} - e_{3}) - e_{5} + e_{9},
  \quad
  v_{2} \defeq e_{1} - e_{3} - e_{4} + e_{8},
  \\
  v_{3} \defeq \sqrt{3} (-e_{2} + e_{3}) + e_{5} + e_{7},
  \quad
  v_{4} \defeq e_{2} - e_{3} - e_{4} + e_{6}.
\end{gather*}
using the standard basis $\{ e_{i} \}_{i=1}^{9}$ for $\R^{9}$.
Then, defining the $4 \times 4$ matrix $\mathcal{H}$ by setting $\mathcal{H}_{ij} \defeq r^{4} (1 - r^{2})^{2}\, v_{i}^{T} D^{2}f(\mu_{0}) v_{j}$ (the factors $r^{4} (1 - r^{2})^{2}$ are multiplied to simplify the expression), its leading principal minors are
\begin{gather*}
  -\frac{a_{0} \Gamma_{1}^{2}}{3}\, F_{3}(c,r),
  \quad
  -\frac{4a_{0}^{2} \Gamma_{1}^{4} c}{3} (1 - r^{2})^{2} F_{2}(c,r),
  \\
  \frac{a_{0}^{3} \Gamma_{1}^{6} c}{3} (1 - r^{2})^{2} F_{2}(c,r) F_{3}(c,r),
  \quad
  a_{0}^{4} \Gamma_{1}^{8} c^{2} (1 - r^{2})^{4} F_{2}(c,r)^{2},
\end{gather*}
where
\begin{equation*}
  F_{3}(c,r) \defeq 9 r^{4} - 5 c (1 - r^{2})^{2} = \frac{9}{2} F_{2}(c,r) - \frac{1}{2} c (1 - r^{2})^{2}.
\end{equation*}
Since $0 < r < 1$ and $c > 0$, all the leading principal minors are positive if $F_{2}(c,r) < 0$ and $F_{3}(c,r) < 0$ by choosing an arbitrary $a_{0} > 0$.
However, since $F_{3}(c,r) < \frac{9}{2}F_{2}(c,r)$ as shown above, $F_{2}(c,r) < 0$ is sufficient.
Hence we have shown the Lyapunov stability under the same condition for the linear stability from \cite[Theorem~3.1]{KoKeCa2014} with $N = 3$.

\subsection{Relative Equilibria and Stability with $N = 4$}
\subsubsection{Equilateral with Center}
As an example with $N = 4$, consider a slight variant of the above by adding another vortex at the center.
This configuration also gives a relative equilibrium if $\Gamma_{1} = \Gamma_{2} = \Gamma_{3}$ regardless of the value of the topological charge $\Gamma_{4}$ of the center vortex; see \Cref{fig:EquilateralWithCenterRelEq}.
\begin{figure}[htbp]
  \centering
  \includegraphics[width=.35\linewidth]{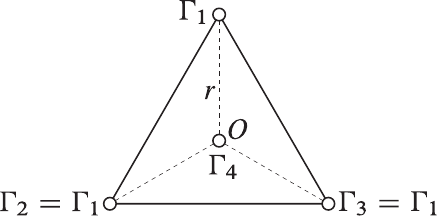}
  \caption{
    Relative equilibrium of equilateral triangle with center:
    Three vortices of equal topological charge ($\Gamma_{1} = \Gamma_{2} = \Gamma_{3}$) are at the vertices of an equilateral triangle with the center at the origin, and another vortex with an arbitrary topological charge $\Gamma_{4}$ at the center.
  }
  \label{fig:EquilateralWithCenterRelEq}
\end{figure}

For $N = 4$, we may use coordinates $(\mu_{1}, \dots, \mu_{16}) \in \R^{16}$ to write
\begin{equation*}
  \mu = \rmi
    \begin{bmatrix}
      \mu_{1} & \mu_{12} & \mu_{13} & \mu_{14}\\
      \mu_{12}^{*} & \mu_{2} & \mu_{23} & \mu_{24}\\
      \mu_{13}^{*} & \mu_{23}^{*} & \mu_{3} & \mu_{34} \\
      \mu_{14}^{*} & \mu_{24}^{*} & \mu_{34}^{*} & \mu_{4}
    \end{bmatrix}
    = \rmi
    \begin{bmatrix}
      \mu_{1} & \mu_{5} + \rmi \mu_{6} & \mu_{7} + \rmi\mu_{8} & \mu_{9} + \rmi\mu_{10}\\
      \mu_{5} - \rmi \mu_{6} & \mu_{2} & \mu_{11} + \rmi\mu_{12} & \mu_{13} + \rmi\mu_{14} \\
      \mu_{7} - \rmi\mu_{8} & \mu_{11} - \rmi\mu_{12} & \mu_{3} & \mu_{15} + \rmi\mu_{16} \\
      \mu_{9} - \rmi\mu_{10} & \mu_{13} - \rmi\mu_{14} & \mu_{15} - \rmi\mu_{16} & \mu_{4}
    \end{bmatrix},
\end{equation*}
Then the relative equilibrium in question corresponds to the fixed point $\mu = \mu_{0}$ of the Lie--Poisson equation~\eqref{eq:LP} in which
\begin{gather*}
  \mu_{i} = |z_{i}|^{2} = r^{2} \quad \forall i \in \{1, 2, 3\},
  \qquad
  \mu_{4} = 0,
  \\
  \mu_{12}^{*} = \mu_{23}^{*} = \mu_{13} = z_{1} z_{3}^{*} = r^{2} e^{\rmi(2\pi/3)},
  \qquad
  \mu_{i4} = 0 \quad \forall i \in \{1, 2, 3\},
\end{gather*}
or equivalently, in terms of $(\mu_{1}, \dots, \mu_{16}) \in \R^{16}$,
\begin{equation*}
  \mu_{0} =
  \parentheses{ r^{2}, r^{2}, r^{2}, 0, -\frac{r^{2}}{2}, -\frac{\sqrt{3}}{2} r^{2}, -\frac{r^{2}}{2}, \frac{\sqrt{3}}{2} r^{2}, 0, 0,
    -\frac{r^{2}}{2}, -\frac{\sqrt{3}}{2} r^{2}, 0, 0, 0, 0 }.
\end{equation*}

\subsubsection{Stability of Equilateral with Center}
We shall not perform the linear stability analysis here because the characteristic equation for the eigenvalues of the linearized system becomes very complicated due to the high-dimensionality of the problem.

On the other hand, the nonlinear analysis is more tractable thanks to the effective dimension reduction using the constraint to the system:
\begin{proposition}
  The relative equilibrium of equilateral triangle with center (see \Cref{fig:EquilateralWithCenterRelEq}) with $\Gamma_{1} = \Gamma_{2} = \Gamma_{3} = \Gamma_{4} = \pm1$ is stable if the parameters $c$ and $r$ satisfy (see \Cref{fig:EquilateralWithCenterStabCond})
  \begin{equation*}
    F_{6}(c,r) \defeq 2 r^{4} - c(1 - r^{2})(2 - 3r^{2}) < 0.
  \end{equation*}
\end{proposition}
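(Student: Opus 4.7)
The plan is to apply the Energy--Casimir method of \Cref{ssec:nonlinear_stability} in exactly the same template used for the equilateral case with $N=3$, but now with four vortices and the additional structure imposed by the center vortex. Specifically, I set
\begin{equation*}
  f(\mu) = a_{0} h(\mu) + a_{1} C(\mu) + \sum_{i=1}^{3} b_{i} R_{i}(\mu)
  + \sum_{1\le i<j\le 3}\bigl(c_{ij}\Re R_{ij}(\mu) + d_{ij}\Im R_{ij}(\mu)\bigr),
\end{equation*}
with $h$ given by \eqref{eq:h} specialized to $\Gamma_{1}=\dots=\Gamma_{4}=\pm 1$, and attempt to choose the ten Lagrange multipliers $(a_{1},b_{1},b_{2},b_{3},c_{12},d_{12},c_{13},d_{13},c_{23},d_{23})$ so that $Df(\mu_{0})=0$.

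The first step is to solve $Df(\mu_{0})=0$ in closed form. I expect the $3$-fold rotational symmetry of the planar triangle to force $b_{1}=b_{2}=b_{3}$ and to relate the $(c_{ij},d_{ij})$ through the cube roots of unity that parametrize $\mu_{0}$; the components of $Df$ involving $\mu_{i4}$ and $\mu_{4}$ (which vanish at $\mu_{0}$) decouple and should be solvable independently, reflecting the fact that the center vortex interacts only with the outer three via the trapping term and the symmetric pair potential. This step is routine but bookkeeping-heavy.

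The second step is to identify $T_{\mu_{0}}M$ with $M=R^{-1}(0)\cap C^{-1}(C(\mu_{0}))$. Since $\mu\in\u(4)_{\Gamma}^{*}$ has $16$ real components, $R$ contributes $2(N-1)^{2}=2\cdot 9=9$ independent real constraints and $C$ one more, so generically $\dim T_{\mu_{0}}M = 2N-2 = 6$. I would choose a basis adapted to the symmetry: three directions describing deformations of the outer equilateral triangle that preserve the Casimir (as in the $N=3$ computation), plus three directions coupling the outer triangle to the center vortex (breaking the $\mu_{i4}=0$ and $\mu_{4}=0$ constraints in a rank-one-preserving way). The constraint $\rank\mu=1$ at a configuration with $\mu_{4}=0$ is delicate, but the characterization via the $R_{i}$ and $R_{ij}$ determinants handles it mechanically.

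The final step is to form the reduced Hessian $\mathcal{H}_{ij}\defeq v_{i}^{T}D^{2}f(\mu_{0})v_{j}$ and check positive definiteness. This is the main obstacle: a $6\times 6$ matrix of messy rational expressions in $c$ and $r$. I anticipate that the symmetry-adapted basis block-diagonalizes $\mathcal{H}$ into a $4\times 4$ block (the ``pure outer triangle'' deformations plus the Casimir-compatible radial mode, recovering the $N=3$ analysis up to a shift from the center vortex) and a $2\times 2$ block (coupling deformations with the center vortex). The $4\times 4$ block's leading principal minors should reproduce conditions analogous to those with $F_{2}$ and $F_{3}$ appearing in the $N=3$ analysis, while the $2\times 2$ block is where the new polynomial $F_{6}(c,r)=2r^{4}-c(1-r^{2})(2-3r^{2})$ emerges, since the factor $(2-3r^{2})$ is precisely the combination one obtains when differentiating the four-vortex Hamiltonian in the direction of coupling an outer vortex to the center. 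Choosing $a_{0}>0$ appropriately, positivity of all leading principal minors will then collapse to the single sharpest condition $F_{6}(c,r)<0$, which gives Lyapunov stability.
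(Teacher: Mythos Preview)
Your overall template is exactly the paper's, so the strategy is sound; but several of your concrete expectations are wrong and would derail the computation if followed literally.

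First, the $\mathbb{Z}_{3}$ symmetry does \emph{not} force $b_{1}=b_{2}=b_{3}$. The constraints $R_{1},R_{2},R_{3}$ are not permuted by the rotation of the outer vortices: $R_{3}$ is the $2\times 2$ minor on rows/columns $3,4$ and hence involves the center vortex. In the paper one finds $b_{1}=b_{2}=-\dfrac{c}{6r^{4}}a_{0}$ but $b_{3}=\dfrac{1}{2r^{4}}\!\left(c-\dfrac{r^{4}}{1-r^{2}}\right)a_{0}$, and similarly $(c_{13},d_{13}),(c_{23},d_{23})$ are nonzero while $d_{12}=0$.

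Second, your attribution of $F_{6}$ to the $2\times 2$ center-coupling block is backwards. In the paper's symmetry-adapted basis the two center-coupling directions ($v_{1},v_{2}$, built from $\mu_{14},\mu_{24},\mu_{34}$) produce the first two leading minors $m_{1}=a_{0}F_{4}$ and $m_{2}=a_{0}^{2}F_{4}^{2}$ with $F_{4}(c,r)=r^{4}+2c(1-r^{2})>0$ automatically. The polynomial $F_{6}$ first appears in $m_{4}$, i.e.\ from the outer-triangle deformation directions $v_{3},\dots,v_{6}$, and a second new polynomial $F_{5}(c,r)=9r^{8}+2c\,r^{4}(1-r^{2})(7r^{2}+2)-25c^{2}(1-r^{2})^{3}$ enters through $m_{3}$ and $m_{5}$.

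Third, and most importantly, the minors do \emph{not} ``collapse'' to $F_{6}<0$ by inspection: positivity of $m_{3}$ and $m_{5}$ requires $F_{5}<0$ in addition to $F_{6}<0$. The paper closes this gap with a short but nontrivial argument: if $F_{6}<0$ then $2-3r^{2}>0$, whence $1-r^{2}>\tfrac{1}{2}(2-3r^{2})$, and this bound yields
\[
  F_{5}(c,r) < \tfrac{1}{2}\bigl(9r^{4}+25c(1-r^{2})\bigr)F_{6}(c,r) - \tfrac{1}{2}c\,r^{4}(1-r^{2})(24-r^{2}) < 0.
\]
Without this implication you would be left with the weaker sufficient condition $F_{5}<0$ and $F_{6}<0$ simultaneously.

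(Minor slip: $R$ has $(N-1)^{2}=9$ real components, not $2(N-1)^{2}$; your final dimension $\dim T_{\mu_{0}}M=6$ is nonetheless correct.)
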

\begin{figure}[htbp]
  \centering
  \includegraphics[width=.375\linewidth]{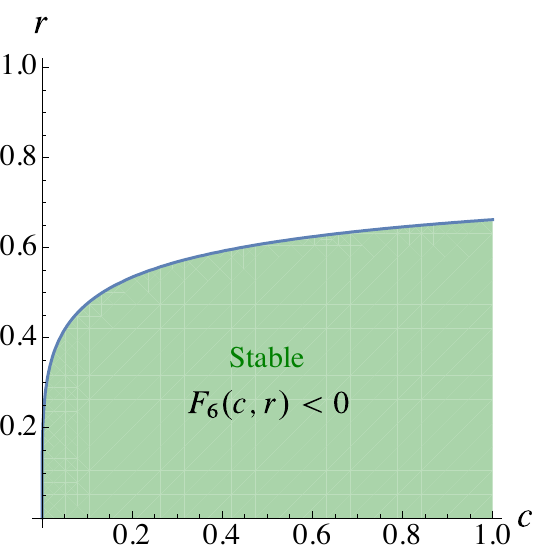}
  \caption{
    The relative equilibrium of the equilateral with center from \Cref{fig:EquilateralWithCenterRelEq} with $\Gamma_{1} = \Gamma_{2} = \Gamma_{3} = \Gamma_{4} = \pm1$ is stable if $F_{6}(c,r) < 0$, the green domain in the $(c,r)$-plane below the blue curve $F_{6}(c,r) = 0$.
  }
  \label{fig:EquilateralWithCenterStabCond}
\end{figure}
\begin{proof}
  Since the case with $\Gamma_{i} = -1$ with $1 \le i \le 4$ is essentially the same as the case with $\Gamma_{i} = 1$ with $1 \le i \le 4$, we shall only consider the latter case for simplicity.
  
  Our Lyapunov function takes the form
  \begin{equation*}
    f(\mu) = a_{0} h(\mu) + a_{1} C(\mu) + \sum_{i=1}^{3} b_{i} R_{i}(\mu) + \sum_{1\le i < j \le 3} (c_{ij} \Re R_{ij}(\mu) + d_{ij} \Im R_{ij}(\mu)),
  \end{equation*}
  where $h$ is given in \eqref{eq:h} and
  \begin{gather*}
    C(\mu) = \mu_{1} + \mu_{2} + \mu_{3} + \mu_{4},
    \\
    R_{1}(\mu) = \mu_{1} \mu_{2} - \mu_{5}^{2} - \mu_{6}^{2},
    \quad
    R_{2}(\mu) = \mu_{2} \mu_{3} - \mu_{11}^{2} - \mu_{12}^{2},
    \quad
    R_{3}(\mu) = \mu_{3} \mu_{4} - \mu_{15}^{2} - \mu_{16}^{2},
    \\
    R_{12}(\mu) =
    \begin{vmatrix}
      \mu_{5} + \rmi \mu_{6} & \mu_{7} + \rmi\mu_{8} \\
      \mu_{2} & \mu_{11} + \rmi\mu_{12}
    \end{vmatrix},
    \quad
    R_{13}(\mu) =
    \begin{vmatrix}
      \mu_{7} + \rmi \mu_{8} & \mu_{9} + \rmi\mu_{10} \\
      \mu_{11} + \rmi\mu_{12} & \mu_{12} + \rmi\mu_{14}
    \end{vmatrix},
    \\
    R_{23}(\mu) =
    \begin{vmatrix}
      \mu_{11} + \rmi\mu_{12} & \mu_{12} + \rmi\mu_{14} \\
      \mu_{3} & \mu_{15} + \rmi\mu_{16}
    \end{vmatrix}.
  \end{gather*}
  Then one sees that $Df(\mu_{0}) = 0$ if
  \begin{gather*}
    a_{1} = \parentheses{ \frac{1}{2(1 - r^{2})} + \frac{c}{r^{2}} } a_{0},
    \quad
    b_{1} = b_{2} = - \frac{c}{6 r^{4}}\, a_{0},
    \quad
    b_{3} = \frac{1}{2r^{4}} \parentheses{ c - \frac{r^{4}}{1 - r^{2}} } a_{0},
    \\
    c_{12} = \frac{c}{3 r^{4}}\, a_{0},
    \quad
    -c_{13} = c_{23} = \frac{c}{2r^{4}}\,a_{0},
    \quad
    d_{12} = 0,
    \quad
    -d_{13} = d_{23} = \frac{\sqrt{3}\,c}{2r^{4}}\,a_{0}.
  \end{gather*}

  Writing $R = (R_{1}, R_{2}, R_{3}, \Re R_{12}, \dots, \Im R_{23})$ and setting $M \defeq R^{-1}(0) \cap C^{-1}(C(\mu_{0}))$, one finds that a basis for $T_{\mu_{0}}M$ is given by
\begin{gather*}
  v_{1} \defeq \frac{\sqrt{3}}{2} (-e_{9} + e_{13}) - \frac{1}{2}(e_{10} + e_{14}) + e_{16},
  \\
  v_{2} \defeq - \frac{1}{2}(e_{9} + e_{13}) + \frac{\sqrt{3}}{2} (e_{10} - e_{14}) + e_{15},
  \\
  v_{3} \defeq \sqrt{3} (e_{1} - e_{3}) + (-e_{6} + e_{12}),
  \quad
  v_{4} \defeq e_{1} - e_{3} - e_{5} + e_{11},
  \\
  v_{5} = \sqrt{3}(-e_{2}  + e_{3}) + e_{6} + e_{8},
  \quad
  v_{6} = e_{2} - e_{3} - e_{5} + e_{7}
\end{gather*}
using the standard basis $\{ e_{i} \}_{i=1}^{16}$ for $\R^{16}$.

Then, defining the $6 \times 6$ matrix $\mathcal{H}$ by setting $\mathcal{H}_{ij} \defeq r^{4} (1 - r^{2})\, v_{i}^{T} D^{2}f(\mu_{0}) v_{j}$, its leading principal minors are
\begin{gather*}
  m_{1} \defeq a_{0}\, F_{4}(c,r),
  \quad
  m_{2} \defeq a_{0}^{2}\, F_{4}(c,r)^{2},
  \quad
  m_{3} \defeq -\frac{a_{0}^{3}}{3(1 - r^{2})} F_{4}(c,r) F_{5}(c,r),
  \\
  m_{4} \defeq -\frac{4}{3} a_{0}^{4}\, c\, r^{4} F_{4}(c,r) F_{6}(c,r),
  \quad
  m_{5} \defeq a_{0}^{5}\, \frac{c\,r^{4}}{3(1 - r^{2})} F_{5}(c,r) F_{6}(c,r),
  \quad
  m_{6} \defeq a_{0}^{6}\, c^{2}\, r^{8} F_{6}(c,r)^{2}.
\end{gather*}
where
\begin{align*}
  F_{4}(c,r) &\defeq r^{4} + 2c (1 - r^{2}),
  \\
  F_{5}(c,r) &\defeq 9 r^{8} + 2c\,r^{4}(1 - r^{2})(7r^{2} + 2) - 25 c^{2}(1 - r^{2})^{3},
  \\
  F_{6}(c,r) &\defeq 2 r^{4} - c(1 - r^{2})(2 - 3r^{2}).
\end{align*}

Given that $c > 0$ and $0 < r < 1$, we see that $F_{4}(c,r) > 0$, and hence $m_{2} > 0$.
One may then take an arbitrary $a_{0} > 0$ so that $m_{1} > 0$ as well.
Moreover, if both $F_{5}(c,r)$ and $F_{6}(c,r)$ are negative, $m_{i} > 0$ for every $i \ge 3$ as well; hence it implies the Lyapunov stability of the fixed point.

However, it turns out that $F_{6} < 0$ implies $F_{5} < 0$.
Indeed, the expression for $F_{6}$ implies that, if $F_{6} < 0$ then
\begin{equation*}
  0 < 2 r^{4} < c(1 - r^{2})(2 - 3r^{2}),
\end{equation*}
and so it is necessarily the case that $2 - 3r^{2} > 0$ because $0 < r < 1$.
Thus we see that
\begin{equation*}
  1 - r^{2} = \frac{1}{2}(2 - 2r^{2}) > \frac{1}{2}(2 - 3r^{2}) > 0.
\end{equation*}
Then one can bound $F_{5}$ above as follows:
\begin{align*}
  F_{5}(c,r) &< 9 r^{8} + 2c\,r^{4}(1 - r^{2})(7r^{2} + 2) - \frac{25}{2} c^{2}(1 - r^{2})^{2}(2 - 3r^{2}) \\
             &= \frac{1}{2}\left( 9r^{4} + 25c\,(1- r^{2}) \right) F_{6}(c,r) - \frac{1}{2} c\,r^{4} (1 - r^{2})(24 - r^{2}).
\end{align*}
This shows that $F_{6} < 0$ implies $F_{5} < 0$ as well, and hence the claimed result follows.
\end{proof}

\section*{Acknowledgments}
This work was supported by NSF grant DMS-2006736.

\bibliography{Confined_BEC_Vortices}
\bibliographystyle{plainnat}

\end{document}